\theoremstyle{definition}
\newtheorem*{definition}{Definition}
\long\def\symbolfootnote[#1]#2{\begingroup%
\def\thefootnote{\fnsymbol{footnote}}\footnote[#1]{#2}\endgroup}
\newtheorem{theorem}{Theorem}
\newtheorem{lemma}{Lemma}
\newtheorem{remark}{Remark}[section]
\newcommand*\diff{\mathop{}\!\mathrm{d}}
\newcommand{\fref}[1]{Fig.~\ref{#1}}
\newcommand{\sref}[1]{Section~\ref{#1}}
\newcommand{\cref}[1]{Chapter~\ref{#1}}
\newcommand{\eref}[1]{(\ref{#1})}
\newcommand{\lref}[1]{Lemma~\ref{#1}}
\newcommand{\thref}[1]{Theorem~\ref{#1}}
\newcommand{\aref}[1]{Appendix~\ref{#1}}
\newcommand{\biagio}[1]{\textcolor{black}{#1}}
\newcommand{\gianluca}[1]{\biagio{#1}}
\def\bearn{\begin{eqnarray*}}
\def\eearn{\end{eqnarray*}}
\def\bear{\begin{eqnarray}}
\def\eear{\end{eqnarray}}
\def\barr{\begin{array}}
\def\earr{\end{array}}
\def\bmat{\left(\begin{array}}
\def\emat{\end{array}\right)}
\def\bd{\begin{definition}}
\def\ed{\end{definition}}
\def\bt{\begin{theorem}}
\def\et{\end{theorem}}
\def\be{\begin{center}\begin{equation}}
\def\ee{\end{equation}\end{center}}
\def\bc{\begin{corollary}}
\def\ec{\end{corollary}}
\def\bl{\begin{lemma}}
\def\el{\end{lemma}}
\def\br{\begin{remark}}
\def\er{\end{remark}}
\newtheorem{problem}{Problem}
\setlist[itemize]{leftmargin=*}
\newtheorem{corollary}{Corollary}[theorem]
\begin{document}

\title{Green Operations of SWIPT Networks: \\The Role of End-User Devices}
\author{\IEEEauthorblockN{Gianluca Rizzo,}
\IEEEauthorblockA{HES-SO Valais, Switzerland \& Universita' di Foggia, Italy - 
gianluca.rizzo@hevs.ch}\\
\and
\IEEEauthorblockN{Marco Ajmone Marsan, }
\IEEEauthorblockA{Institute IMDEA Networks, Spain - marco.ajmone@polito.it}\\
\and 
\IEEEauthorblockN{Christian Esposito, Biagio Boi,}
\IEEEauthorblockA{Universita' di Salerno, Italy - [esposito, bboi]@unisa.it}
}

\maketitle

\begin{abstract} 
Internet of Things (IoT) devices often come with batteries of limited capacity that are not easily replaceable or rechargeable, and that constrain significantly the sensing, computing, and communication tasks that devices can perform.
The Simultaneous Wireless Information and Power Transfer (SWIPT) paradigm addresses this issue by delivering power wirelessly to energy-harvesting IoT devices with the same signal used for information transfer. For their peculiarity, these networks require specific energy-efficient planning and management approaches. However, to date, it is not clear what are the most effective strategies for managing a SWIPT network for energy efficiency. In this paper, we address this issue by developing an analytical model based on stochastic geometry, accounting for the statistics of user-perceived performance and base station scheduling. We formulate an optimization problem for deriving the energy-optimal configuration as a function of the main system parameters, and we propose a genetic algorithm approach to solve it. Our results enable a first-order evaluation of the most effective strategies for energy-efficient provisioning of power and communications in a SWIPT network. We show that the service capacity brought about by users brings energy-efficient dynamic network provisioning strategies that radically differ from those of networks with no wireless power transfer.

\end{abstract}
\section{Introduction}
\footnotetext{This paper was presented in part at the 20th International Symposium on Modeling and Optimization in Mobile, Ad Hoc, and Wireless Networks (WiOpt) \cite{9930622}.}
The advent of the Internet of Things (IoT) paradigm has significantly impacted the energy consumption of Radio Access Networks (RAN) \cite{Rajab2023}. On the one side, the proliferation of IoT devices is leading to an exponential increase in the number of connected devices and in the amount of data generated and exchanged \cite{PrepareY71:online}. On the other, the wide heterogeneity of IoT communication requirements and traffic patterns, together with the irregularity in the profiles of device activity and sleep cycles, induce patterns of energy consumption within the RAN that differ substantially from those of traditional, broadband-centric cellular networks \cite{TAHAEI2020102538,wan9206046}, with a strong impact on the way the network is planned and operated.\\
This impact is further amplified when the cellular network does not only deliver connectivity to all its users but also power to IoT devices. In this work, we consider scenarios in which IoT devices harvest RF energy from the environment (\textit{passive energy harvesting} - EH) as well as from the signal transmitted by their serving base station (\textit{active EH}) \cite{sanislav2021energy}. Such configurations are interesting as they allow for overcoming the power budget constraints that limit the potential of battery-operated IoT devices in many application domains. Indeed, these limitations often imply tight constraints on the amount of sensing, computing, and actuation that IoT devices can perform, negatively affecting their availability.

Among the available technologies for wireless power delivery to EH devices, simultaneous wireless information and power transfer (SWIPT) \cite{varshney2008transporting,SWIPT_costanzo2021evolution,SWIPT_huang2017simultaneous,SWIPT_ozyurt2022survey,SWIPT_perera2017simultaneous} is of particular interest, as it exploits the same cellular infrastructure used to deliver data. This feature makes it easier and more cost-effective to deploy than alternative wireless power delivery approaches. However, differently than  traditional single-service cellular networks with no active power transfer, managing SWIPT networks for energy efficiency implies accounting for the interdependence between the energy consumed by BSs and that consumed by IoT users which, for their ubiquity and rising numbers, are poised to play a key role in determining the overall energy footprint of the network. 
Thus, SWIPT networks require the development of specific approaches for energy-efficient network design and management, which may substantially differ from traditional, single-service RANs.

The issue of energy efficiency in SWIPT is being actively investigated \cite{zhou_wireless_2013,huang_energy-efficient_2018,Tran8388301}. However, these works consider simple network scenarios, composed of a single base station (BS). They focus on accounting for the impact of the interplay between power and information delivery on the energy footprint of the BS. Other works (\cite{lu_energy_2021,Lee9031335}) consider a set of BSs deployed according to a specific layout, and they propose heuristics for saving energy in those configurations. Being tied to the specifics of the scenario considered, these heuristic approaches do not allow drawing general conclusions on the main features and limitations of energy-optimal management strategies in SWIPT networks.
Another line of research focuses on applying the tools of stochastic geometry (SG) \cite{lam2016system,di_renzo_system-level_2017,Baccelli_stochasticgeometry} for modeling system-level average performance in SWIPT networks. \biagio{SG allows focusing on the average behavior of the system over many realizations of the process of UE (user equipment) and BS spatial distributions, making it possible to analyze the trade-offs between different potential objectives such as maximizing coverage, maximizing throughput, and maximizing energy harvesting, in ways that would otherwise be very difficult or way less accurate.} Crucially, however, none of these results account for the effects of resource scheduling among users on user-perceived performance, nor do they address the key issue of characterizing  energy-optimal strategies for QoS-aware (for both power and information delivery) dynamic network provisioning, as a function of traffic and energy demand. \biagio{Thus, it is still unclear what is the actual potential for energy efficiency of strategies which dynamically tune the configuration of SWIPT networks, and what is the impact of optimizing the main system parameters} on the energy consumed by IoT users and BSs.

In this paper, we address these issues and propose a stochastic geometry analytical framework for modeling the relationship between energy consumption, user perceived performance, and the main system parameters in SWIPT cellular networks, which accounts for QoS-aware resource scheduling among broadband and IoT EH users. Our framework allows characterizing in a scenario-independent manner the potential for QoS-aware energy efficiency in these networks, as well as the main trade-offs between user-perceived performance in the communication and power delivery services, resource utilization, and overall energy consumption of the network. Specifically, the contributions of this paper are as follows:
\begin{itemize}
    \item We derive a set of analytical results for the key statistics of the main performance indicators of a SWIPT network, which account for resource scheduling among broadband and IoT EH users, and allow modelling the impact of the main system parameters on network energy consumption;
    \biagio{\item We formulate an optimization problem to determine the potential energy savings achievable by tuning the main network parameters while guaranteeing a target user-perceived performance for both information transfer (in downlink and uplink) and wireless power delivery.}
    \item We elaborate a Genetic Algorithm (GA) approach to derive energy-efficient SWIPT network configurations which achieve the target QoS levels;
    \item We validate numerically our approach,  shedding light on several aspects of the tradeoff between user-perceived performance and resource efficiency in a SWIPT network. Our analysis allows observing some unexpected and rather surprising effects such as, in some cases, a decrease in the optimal density of BSs required to serve an increasingly dense population of UEs, and the irrelevance of active power delivery for very dense populations of IoT devices.
\end{itemize}
The rest of this paper is organized as follows. \sref{sota} reviews the relevant state of the art. In \sref{model}, we present the system model, and in \sref{sec:model} our analytical approach for user-perceived performance modelling for both information and power transfer. In \sref{sec:optimizing}, we present our formulation of the optimization problem and the GA heuristic for solving it efficiently. In \sref{sec:numerical} we assess numerically our results. Finally, \sref{sec:concl} concludes the paper.
%



\section{Related works}
\label{sota}


The growing interest towards the integration of wireless power transfer in cellular networks is due to its potential for enabling truly ubiquitous and self-sustained cellular-based IoT \cite{varshney2008transporting,SWIPT_costanzo2021evolution,SWIPT_huang2017simultaneous,SWIPT_ozyurt2022survey,SWIPT_perera2017simultaneous}. Indeed, differently from other energy harvesting techniques (e.g., based on solar, or solely on passive RF harvesting) those based on active wireless power transfer have the advantage of being stable and available at any time \cite{sanislav2021energy}.
Several technical challenges however still need to be addressed for the practical viability of SWIPT networks, related to energy-harvesting transceiver and algorithm design, system integration, protocol design, and energy-efficient network planning and operations \cite{niyato2017wireless,Clerckx8476597}.

Initial works on energy efficiency in SWIPT networks considered scenarios composed
of a BS or a broadband unit with
several remote radio heads, and a population of EH devices. These works aim to optimize the system configuration in terms of energy efficiency  \cite{zhou_wireless_2013,huang_energy-efficient_2018, luo_study_2021}. Among these, \cite{zhou_wireless_2013} characterizes the data rate vs. energy trade-off in a single SWIPT BS scenario. For the same setting, \cite{huang_energy-efficient_2018} proposes an approach for optimizing transmit power allocation to each user.  
\cite{luo_study_2021} elaborates a Markov chain model to derive the joint power and connectivity outage probability. \cite{havutran} 
investigates strategies for transmit-power-efficient resource allocation. 
All of these works provide a first insight into the basic performance patterns of a SWIPT system. However, they are based on single BS configurations, which are not representative of the average performance of a whole SWIPT network, and they do not account for the impact of the statistics of BS and user distribution. In particular, they lack accurate modelling of the effects of co-channel interference and passive RF power transfer from BSs other than the serving one, as well as from users associated with them.

Another set of results focuses on sample SWIPT network configurations, with a given set of BS and a given spatial layout of BS and of users. They propose algorithms for energy-optimal configurations, for networks with CoMP \cite{tang_energy_2018-1}, in NOMA networks \cite{tang_energy_2019}, in full duplex settings \cite{yuan_energy_2019}, as well as in scenarios with two-way relays \cite{li_robust_2020}. Among these, \cite{lu_energy_2021} proposes an algorithm to minimize the total energy consumption of a set of BSs and a population of IoT devices, over sub-carrier and transmit power allocation.
However, the strategies proposed in these works focus on deriving the optimal configuration for a sample BS deployment, with a specific layout and with a given user spatial distribution. Being tied to a specific setup, they do not allow drawing general considerations on the overall potential of these schemes for energy-saving. 

Recently, stochastic geometry (SG) 
\cite{lam2016system,di_renzo_system-level_2017,Baccelli_stochasticgeometry,Chu9463400} 
 has emerged as an effective modelling approach for a stochastic characterization of the performance patterns in a wireless network. 
  SG allows focusing on the average behaviour of the system over many realizations of the process of UE (user equipment) and BS spatial distributions. In SWIPT networks, SG has been used to characterize several performance aspects, such as the data rate vs. energy trade-off \cite{di_renzo_system-level_2017}, or the optimization of the D2D successful transmission rate \cite{Chu9463400}, to name a few.
These works however, often for the sake of analytical tractability, do not account for the effects of resource allocation and scheduling among users (such as the statistics of the sharing of BS time across all associated users) which are key for accurate stochastic modelling of the performance perceived by the user. As such, they do not enable an accurate and realistic characterization of the main trade-offs between the energy consumed by the network, user-perceived performance (for both power and information delivery), and resource utilization.
\biagio{Thus, they leave open the more general issue of determining the most effective QoS-aware dimensioning and tuning strategies to achieve system-level energy proportionality in a SWIPT network, as a function of the main system parameters.}

\biagio{In the present paper, we address these issues by proposing an SG modelling framework which captures the relationship between energy consumption, user-perceived performance, and the main system parameters in SWIPT cellular networks. 
Our framework allows characterizing in a scenario-independent manner the potential for QoS-aware energy efficiency in these networks, as well as the main trade-offs between user-perceived performance in the communication and power delivery services, resource utilization, and overall energy consumption of the network. Our analytical approach overcomes the limitations of existing results, as follows:
\begin{itemize}
    \item It allows an estimation of the potential energy savings achievable in a SWIPT network through QoS-aware (for both energy and data transfer) tuning of network parameters.
    \item It accounts for the effects of base station resource allocation and scheduling among users on the statistics of user-perceived QoS.
\end{itemize}
In doing so, our work allows a first characterization of the performance-energy tradeoff, revealing some new and rather unexpected patterns related to the impact of the population of user devices on the overall energy efficiency of the network.} 
 
\section{System Model}
\label{model}


\biagio{We utilize stochastic geometry to model the spatial distribution of network elements in our system}. Specifically, we assume that BSs are distributed in space according to a homogeneous planar Poisson Point Process (PPP) with a density equal to $\lambda_b$ BSs per $km^2$. UEs are distributed in space according to a homogeneous PPP with an intensity equal to $\lambda_u$ UEs per $km^2$. UEs are either broadband (BB) terminals, or  IoT (Internet of Things) devices. We assume the latter are a fraction $\gamma$ of the total number of UEs.

\begin{figure}[!t]
\centering
\includegraphics*[width = 4.5in]{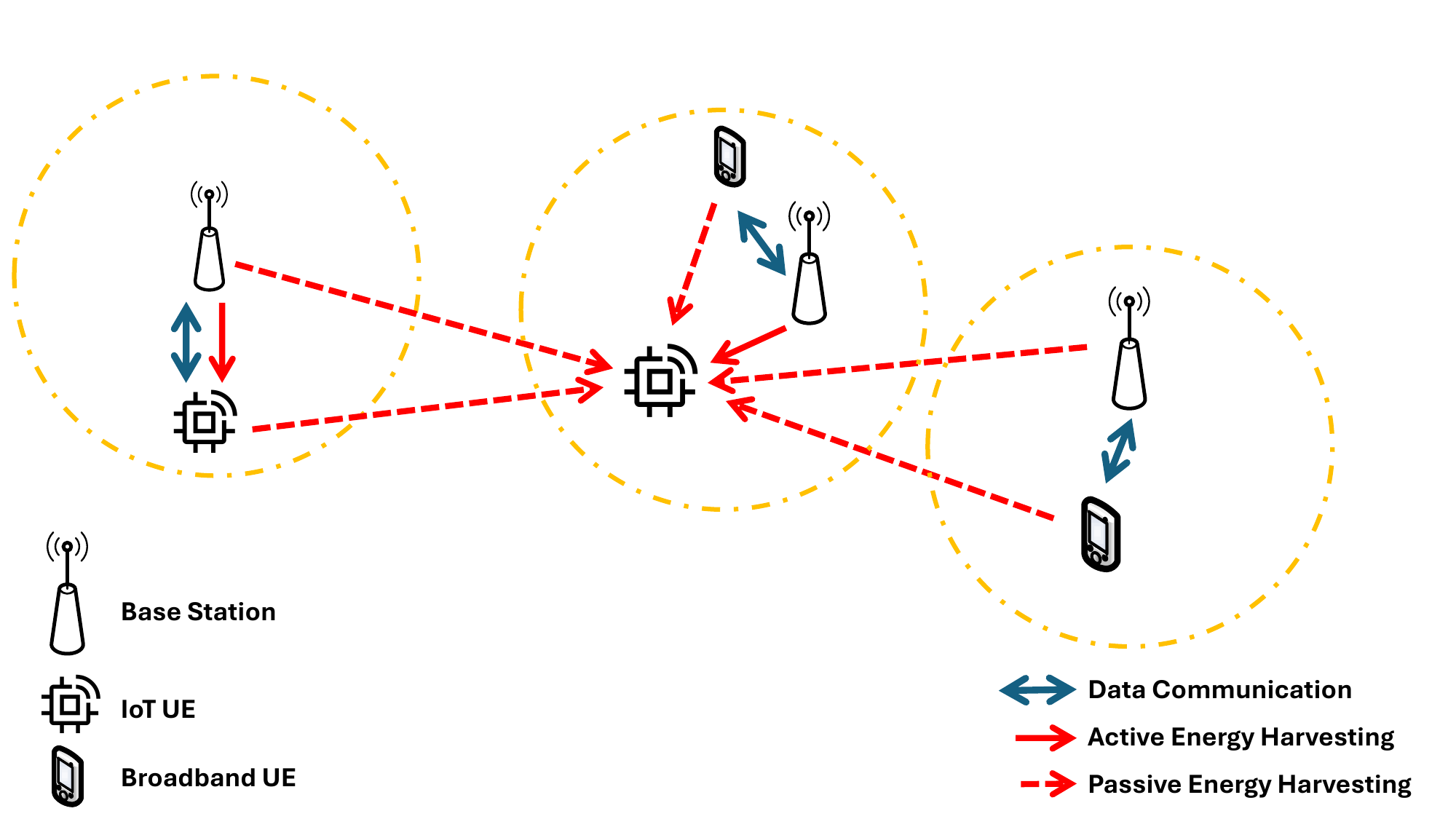}
\caption{\footnotesize Outline of the system model for the SWIPT wireless network considered in this work.\normalsize} 
\label{fig:system_outline}
\end{figure}

We consider the case in which IoT devices harvest the energy necessary for their own operation, while BB users don't. To this end, each IoT device is equipped with two separate receivers, 
one for information and another for energy, and it is capable of exploiting downlink signals for decoding its intended information, as well as downlink and uplink signals from both BSs and UEs for charging its battery. 
The receiver operating modes we consider are~\cite{nasir2013relaying}:
\begin{itemize}
    \item \textbf{Time Switching (TS)}, by which a fraction $\mathbf{\eta}$ (the \textit{time switch ratio}, with $0\leq\eta\leq 1$) of the time dedicated by a BS to serve an associated IoT device in downlink is devoted to active power transfer, i.e., it is used by the UE for harvesting energy from the signal received from the BS. The rest of that time is used for receiving information.
    \item\textbf{Static Power Splitting (SPS)}. In this operating mode, the receiver antenna at every IoT device is followed by a splitter, which sends a fraction $\mathbf{\nu}$ (the \textit{power split ratio}, with $0\leq\nu\leq 1$) of the total signal power received at any time instant to the RF harvesting electronics. The rest of the received signal power is instead used for decoding information.
    \item \textbf{Dynamic Power Splitting (DPS)}. In this operating mode, the receiver architecture is the same as in the SPS mode. The only difference is that, in DPS, at every IoT user the power split ratio is $\mathbf{\nu}$ when the serving base station is transmitting to that user, and it is equal to one for the rest of the time. That is, when an IoT node is not receiving data from its serving base station, all the received power (from any RF source, including all BSs and users) is fed to the RF harvesting electronics.
\end{itemize}
We assume $\eta$ and $\nu$ are the same for all devices.
Thus, IoT devices harvest energy not only from their serving BS (\textit{active charging}), but also from the signal received from all of the other BSs, as well as from uplink transmissions from both IoT and BB UEs (\textit{passive charging}).
\begin{figure}[!t]
\centering
\includegraphics*[width = 4.5in]{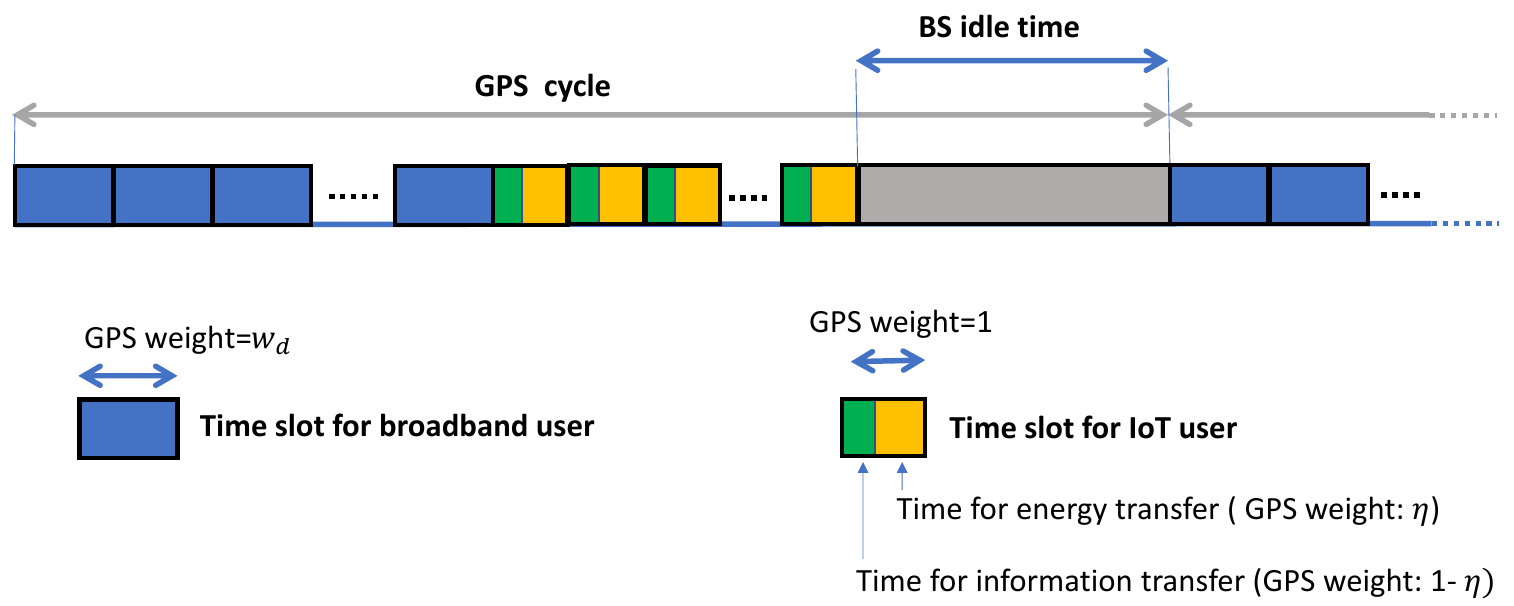}
\caption{\footnotesize Downlink time scheduling scheme for a SWIPT BS, for a network with time switching EH receiver architecture, with GPS weights.\normalsize} 
\label{fig:scheduling}
\end{figure}

We assume that BSs use a generalized processor sharing (GPS) mechanism to divide BS time among all the connected UEs. 
In downlink, 
the GPS weights are $1$ for IoT UEs, and 
$w_d$ for BB UEs. 
The time spent by the BS without transmitting is modeled as a user with a GPS weight $\beta_{d}$.
As for uplink, in all configurations, the GPS weights are $1$ for IoT UEs, $\delta_u$ for BB UEs, and $\beta_{u}$ for the uplink BS time not assigned to any UE.
IoT UEs periodically cycle between two operational states. During the \textit{active state}, they send and receive data, plus possibly they perform some other task, such as sensing, while harvesting energy from active and passive sources. In the \textit{low power} state, IoT devices only harvest energy. We assume that each IoT device is disconnected from the power grid, and possesses an ideal battery (i.e., with efficiency equal to one) with enough capacity to compensate for the fluctuations in the energy consumed and harvested. Note however that our approach can be easily extended to consider battery nonidealities and limitations due to finite battery capacity. With $\phi$ we denote the fraction of time spent by an IoT device in the active state. BB users are instead assumed to always be in the active state.\\
We assume the system is in saturation, i.e., BSs have always data to send to users, and users in the active state have always data to transmit. 
A key aspect of the EH process is the function $h=\Theta(h_{in})$ that maps the received power $h_{in}$ to the harvested power $h$ for EH IoT devices. $\Theta$ models the effect of several technological factors affecting the efficiency of the energy harvesting process, ranging from receiver antenna gain to exposure of the specific IoT device, to receiver architecture, among others. In this work, we consider two different models:
\begin{itemize}
    \item a linear model, by which $\Theta(h_{in})=\xi h_{in}$, with $\xi\leq 1$; and
    \item a nonlinear model, where the mapping function is a normalized sigmoid \cite{wang_wirelessly-powertwoway_2017}, given by   \begin{equation}\label{eq:sigmoid}
        \Theta(h_{in}) = \max\left\{\frac{h_{max}}{e^{-\chi h_{s}+\iota}} \left(\frac{1+e^{-\chi h_{s}+\iota}}{1+e^{-\chi h_{in}+\iota}}-1\right), 0\right\}
    \end{equation}
     $h_{max}$ is the maximum harvested power when the energy harvesting circuit is saturated, while $h_{s}$ is the harvester's sensitivity threshold, i.e. the minimum amount of power able to activate the harvesting process. $\chi$ and $\iota$ are parameters that control the steepness of the sigmoid.
\end{itemize} 
Note however that our approach extends easily to other EH models.\\
\begin{table}
    \centering
    \caption{Main notation used in the paper}
    \footnotesize
    \label{tab:symbols}
    \begin{tabular}{lp{15cm}}
        \hline
        \textbf{Name} &   \textbf{Description} \\
        \hline
$\bar{\tau}_j$, $j=d,u$ & Mean ideal per-bit delay  in downlink/uplink ($s^{-1}$)\\
$\tau_j$, $j=d,u$  & Per-bit delay  in downlink/uplink ($s^{-1}$)\\  
$\tau_j^0$, $j=d,u$ & Target per-bit delay in downlink/uplink ($s^{-1}$)\\
$\xi$ & EH conversion efficiency ratio (linear model)\\
$h_0$ & Minimum harvested power required by each IoT device ($W$)\\
$\mu$ & Maximum acceptable ratio of IoT users which harvest less than $h_0$ W\\
$\phi$ & Fraction of time in which IoT devices are active\\
$\lambda_b$ & Mean BS density ($m^{-2}$)\\
$\lambda_u$ &  Mean density of user terminals ($m^{-2}$)\\
$\gamma$ & Fraction of user terminals which are IoT devices\\
$\eta$ & Fraction of the time dedicated by a BS to serve an IoT device in downlink used for harvesting energy from the signal received from the base station\\
$\nu$ & Fraction of the power received at any time instant by an IoT device which is used for energy harvesting\\
$\beta_j$, $j=d,u$ & GPS weight of the amount of time spent by a  BS in idle mode in downlink/uplink\\
$w_d$ & Ratio between the amount of time dedicated to serving a BB UE and the one dedicated to serving an IoT user in downlink\\
$\delta_j$, $j=d,u$  & Ratio between the BS time dedicated to a BB user, and the base station time dedicated to an IoT user for data transfer in downlink/uplink\\
$k$ & Frequency reuse factor\\
$\alpha$ & Path loss exponent\\  
$P$ & BS transmit power ($W$)\\
$G$ & BS antenna gain\\
$L$ & BS antenna loss (out of beam)\\
      \hline
    \end{tabular}
    \normalsize
\end{table}
Given that the cellular RAN delivers two classes of services to two different types of UE, the performance metrics are defined as follows:
\begin{itemize}
    \item \textbf{Information transfer}: For all UEs, the end-user performance metric is the \textit{per-bit delay $\tau$} of data transfers. It is the time required to transfer a single bit, and it is thus the inverse of the short-term throughput, i.e., of the rate at which data is transferred.
    \item\textbf{Power transfer}: The performance parameter is the amount of power harvested by an IoT device, denoted with $h$.
\end{itemize}
As a result of the scheduling strategy (both in downlink and uplink), the amount of harvested energy may vary substantially across the various devices.
These metrics must be considered relative to the corresponding application demands. As an example, the energy harvested by each device during a complete on-off cycle (and therefore the value of the parameter $h$) must be sufficient to compensate for the energy consumed during the cycle by such tasks as sensing, processing, storing, and distributing data, while accounting for battery efficiency and for fluctuations in consumption patterns due to changes in the operational state of IoT devices. Accordingly, the target performance for information transfer is related to application-specific requirements.

\subsection{Channel and Service Model}
Our channel model only takes into account distance-dependent path loss. {Incorporating the effects of fading and shadowing would not alter our approach, and is left for future consideration.} We assume that \textit{random frequency reuse} is in place, with reuse factor $k$. That is, every BS is assigned one out of $k$ frequency bands with equal probability.\\
We assume that UEs are associated with the BS that provides the largest SINR at the user location. We consider urban scenarios, where the high capacity demand justifies strategies for energy-efficient network planning and management, and where the assumption of high attenuation (with exponent $\alpha\geq 3$) typically holds. In these settings, as no fading is considered and all BSs use the same transmit power, assuming that users associate to the closest BS is a reasonable approximation \cite{Baccelli_stochasticgeometry}. 
\\ 
We assume BS antennas use beamforming, and we denote with $G$ the beamforming gain and with $L$ the side lobes attenuation.
If with $a$ we denote the 
aperture of the main lobe (in degrees),
the relationship between $G$ and $L$ is given by
\[
L= 1-(G-1)\frac{a}{360-a}
\]
Since users are assumed to be distributed uniformly in space, and base station time is supposed to be shared in equal parts across users of the same type, 
the mean power received by a user while it is not being served from the base station to which it is associated (located at a distance $r$) is $PL_g r^{-\alpha}$, with $L_g=\frac{Ga+L(360-a)}{360}$.
Thus $L_gr^{-\alpha}$ is the mean attenuation with which the power transmitted by the serving base station is received by a user when it is not being served.

Denote by $S(x)$ the location of the BS that is closest to a UE located at $x$.
We denote the capacity of a user located at a distance $r$ from the BS by 
$C(r,P,G,I)$ bit/s per Hertz, where $P$ is the BS transmit power, and $I$ the total received interfering power. 
We model $C(r,P,G,I)$ using Shannon's capacity law: 
\vspace{-0.3in}
\be\label{eq:shannon}
C(r,P,G,I) = \frac{B}{k} \log_{2}\left(1 + \frac{P G r^{-\alpha}}{N_0+I(r,k)}\right)
\ee
where $\alpha$ is the attenuation coefficient, $N_0$ the power spectral density of the additive white Gaussian noise, and $k$ the reuse factor. 

\subsection{Base Station and UE Energy Consumption Model}
\label{sec:energymodel}

The power consumed by a BS depends on several factors, which vary according to the BS type (e.g. macro, micro, femto) and the implementation technology (e.g. standalone vs cloud-RAN), among others. 
In what follows, we adopt a very flexible BS energy model, first proposed in \cite{futureproof2015,mushtaq_power_2017}, by which the power consumed by a BS, denoted as $P_{BS}$, is given by the following expression:
\vspace{-0.3in}
\be\label{eq:energymodel}
P_{BS}=q_1 + U_d[q_2+q_3 (P-P_{min})]
\ee
In this model, the power consumed is thus given by the sum of three contributions.
\begin{itemize}
    \item A first contribution, given by $q_1$, is constant, and depends only on the specific type of BS considered (e.g. macro, micro, femto, cloud). It models the power consumed when the BS is not carrying any traffic to/from users. As such, it does not depend on utilization or transmit power, and it is due to the power consumed by, e.g., part of the cooling function, by power amplifier consumption in idle state, and by all those functions which keep the BS in an operational (i.e. non standby) state. 
    \item The second component $q_2 U_d$, in which   $U_d$ is the downlink BS utilization, models that fraction of consumed power which depends on BS utilization (and it is thus proportional to the amount of traffic served), but not on transmit power. It models the power consumed by such functions as baseband processing and RF signal processing.
\item Finally, the third component $ q_3 U_d (P-P_{min})$ (where and $P$ is its transmit power, which we assume varies within the interval $[P_{min},P_{max}]$) models the fraction of consumed power due to the power amplifier which depends, at the same time, on the transmit power $P$ and on the fraction of time that the BS is busy transmitting (given by $U_d$).
\end{itemize}
This energy model is very flexible and suited for accounting not only for implementations with different degrees of load proportionality, but also for cloud RAN configurations.

\subsection{Base station service model}
\label{sec:servicemodel}
We define the utilization $U(S(x))$ of the BS serving a UE at location $x$ as the average fraction of time in which the BS is busy transmitting (in downlink -- $U(S_d(x))$) or receiving (in uplink -- $U(S_u(x))$), respectively. Thus, the expression of the utilization of the BS is given by the fraction of BS time dedicated to all active users:
\vspace{-0.3in}
\be\label{eq_utilization_1}
U_d(S(x))=\frac{N_{iot}(S(x))+w_d N_{bb}(S(x))}{N_{iot}(S(x))+w_d N_{bb}(S(x))+\beta_{d}}
\ee 
\vspace{-0.15in}
\be\label{eq_utilization_2}
U_u(S(x))=\frac{N_{iot}(S(x))+\delta_u N_{bb}(S(x))}{ N_{iot}(S(x))+\delta_u N_{bb}(S(x))+\beta_{u}}
\ee 
where $N_{iot}(S(x))$ and $N_{bb}(S(x))$ denote respectively the number of broadband and IoT users in the \textit{active} state, and associated with the same BS as the UE at location $x$. Note that, as IoT users cycle between active and inactive states, on average only a fraction $\phi$ of them is active at any point in time.
Given $w_d$ and $\delta_u$, by tuning $\beta_j$ it is possible to vary the mean amount of service received by UEs for both communication and energy transfer, and the overall BS utilization, both in downlink and in uplink. 

\begin{definition} The ideal per-bit delay perceived by a UE is the per bit delay which a UE would perceive if the BS with which the UE is associated had utilization equal to 1.
\end{definition}
Note that the above definition does not assume that \textit{all} of the BSs have utilization equal to one, but only the BS serving the considered UE. From expressions (\ref{eq_utilization_1}), (\ref{eq_utilization_2}) and the above definition, we have that for a UE at $x$, the relationship between the ideal per bit delay $\tau_j^{id}$, $j\in\;d,u$ and the actual per bit delay, both in uplink and in downlink, is
$U_j(S(x))\tau_{j}(S(x))=\tau_j^{id}(S(x))$, with $j\in\;d,u$. Indeed, from the expression of the utilization, the ratio between the ideal and the actual per-bit delay is equal to the fraction of time the BS is active.

For each UE type, and for both uplink and downlink, a notion of target minimum quality of service (QoS) is defined. Namely, UEs are said to perceive satisfactory performance if the average per-bit delay experienced by a \textit{typical} BB user (resp. IoT device)\footnote{The definition of the typical user in the system is provided by classical Palm theory \cite{Stoyan1987}.} are less than their respective predefined target values. 
In what follows, for a given $\eta$, we assume each BS adopts the values of $\beta_j$, $j\in\;d,u$ (and thus of utilization in downlink and uplink) which achieve the target QoS for communications. 
This makes the average of the actual per bit delays (over all the users associated to that BS) coincide with the target values.


\section{Modeling User-Level Performance}
\label{sec:model}
In this section, we characterize the main performance parameters, i.e., the per-bit  delay and the harvested power of a typical user who is just beginning service\footnote{This assumption is standard in stochastic geometry, and it is one of the possible ways of choosing the point of view from which to carry out the analysis. Indeed, there is no difference in terms of performance with respect of the case in which users have been in the system for a while.}
, as well as the mean harvested power, as a function of the main system parameters.
The expression of the power harvested by a user at $x$ is given by the following result.
\begin{lemma}\label{lemma:perjouledelay} Let 
$K(x)=[N_{iot}(S(x))+w_d N_{bb}(S(x))]^{-1}$.
Then the power harvested by a user at $x$ is $h(x)=\Theta(h_{in}(x))$, where the received power $h_{in}(x)$ is given by: 
\begin{itemize}
    \item \textbf{TS}:
    \vspace{-0.1in}\begin{equation}\label{eq:harvested_power_TS}
h_{in}(x)=P D(x)^{-\alpha}U_d(S(x))\left[G\eta K(x)+L_g(1-K(x))\right]+\left[1-K(x)U_d(S(x))(1-\eta)\right](I(x) +O(x))
\end{equation}
    \item \textbf{SPS}:
\vspace{-0.1in}
\begin{equation}\label{eq:harvested_power_SPS}
h_{in}(x)=\nu P D(x)^{-\alpha}U_d(S(x))\left[G K(x)+L_g(1-K(x))\right]+\nu(I(x) +O(x))
\end{equation}
    \item \textbf{DPS}:
    \vspace{-0.1in}
\begin{equation}\label{eq:harvested_power_SPS}
h_{in}(x)= P D(x)^{-\alpha}U_d(S(x))\left[G\nu K(x)+L_g(1-K(x))\right] +I(x) +O(x)
\end{equation}
\end{itemize}

$I(x)$ is the total power harvested  by the user at $x$ from BSs other than the one with which it is associated. $O(x)$ is the power harvested from UE transmissions, averaged over time, and $U_d(S(x))$ is the downlink utilization of the BS with which the user at $x$ is associated.
\end{lemma}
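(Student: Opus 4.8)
The plan is to fix a tagged IoT device at $x$, served by the BS $S(x)$ at distance $D(x)$, compute the mean power $h_{in}(x)$ that reaches its energy-harvesting electronics, and then obtain the harvested power as $h(x)=\Theta(h_{in}(x))$. I would write the received power as the sum of three physically distinct streams: the signal from the serving BS (active charging); the aggregate signal from all other BSs, whose time-averaged harvested value is by definition $I(x)$; and the aggregate of all UE uplink transmissions, with time-averaged value $O(x)$. Since $I(x)$ and $O(x)$ are taken as given in the statement, the substance of the lemma is the bookkeeping of how much of each stream is actually routed to the harvester, which is governed by the GPS scheduling of the serving BS in \eref{eq_utilization_1} and by the receiver architecture.

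First I would treat the serving-BS stream, which is common to all three modes up to the split/switch factor. Using $U_d(S(x))$ from \eref{eq_utilization_1} and $K(x)=[N_{iot}(S(x))+w_dN_{bb}(S(x))]^{-1}$, I would split the timeline seen by the tagged user into three states: the main lobe points at the tagged user (fraction $U_d(S(x))K(x)$, gain $G$); the BS is busy serving another user (fraction $U_d(S(x))(1-K(x))$, the tagged user off-beam with mean gain $L_g$); and the BS is idle (fraction $1-U_d(S(x))$, no transmission). Here $K(x)$ is the share of \emph{busy} time a single IoT user receives under GPS, and $L_g=\frac{Ga+L(360-a)}{360}$ is exactly the beamforming gain averaged over a uniformly random serving direction, which is what licenses its use during the time the BS serves other users. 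Averaging $P G D(x)^{-\alpha}$, $P L_g D(x)^{-\alpha}$, and $0$ over these states yields the baseline time-averaged serving-BS power $P D(x)^{-\alpha}U_d(S(x))[GK(x)+L_g(1-K(x))]$.

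The three expressions then follow by applying each mode's harvesting rule to this baseline and to the ever-present external floor $I(x)+O(x)$. For \textbf{SPS}, a constant fraction $\nu$ of the total received power is harvested at every instant, so both the baseline and $I(x)+O(x)$ are simply scaled by $\nu$. For \textbf{TS}, harvesting is active only during the fraction $\eta$ of the served sub-interval and at full rate whenever the device is not being served; hence the main-lobe term acquires the factor $\eta$, the off-beam term $L_g(1-K(x))$ is still collected at full rate, and the external floor is collected over the complementary time fraction $1-K(x)U_d(S(x))(1-\eta)$, reproducing the TS formula. For \textbf{DPS}, the split ratio is $\nu$ only while the user is served and $1$ otherwise, so the main-lobe term is weighted by $\nu$ while the off-beam term and the external floor are collected at full rate, giving the DPS formula.

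The step I expect to be the crux is the mode-specific accounting rather than any single calculation: in each scheduling state one must track simultaneously which source is present (serving main lobe, off-beam serving signal, or external floor) and which collection factor ($\eta$, $\nu$, or $1$) applies, and in particular decide whether the external floor $I(x)+O(x)$ is gathered during the served interval. This choice of convention is exactly what produces the nontrivial prefactor $1-K(x)U_d(S(x))(1-\eta)$ in the TS case, distinguishes it from the flat $\nu$ of SPS, and fixes the DPS form; I would state it explicitly as part of the model. A secondary subtlety, which I would flag as a modelling assumption, is the conversion map: because the received power varies across the three scheduling states, writing $h(x)=\Theta(h_{in}(x))$ with $h_{in}(x)$ the \emph{time-averaged} received power is exact for the linear model $\Theta(h_{in})=\xi h_{in}$ but is a first-order (mean-field) approximation for the nonlinear sigmoid of \eref{eq:sigmoid}; I would justify it by noting that the roughly time-invariant passive floor $I(x)+O(x)$ dominates and bounds the fluctuation seen by $\Theta$.
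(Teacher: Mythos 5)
Your proposal is correct and follows essentially the same route as the paper's proof: a GPS-based time-sharing decomposition of the serving-BS signal into the actively-served fraction $\eta U_d(S(x))K(x)$ with gain $G$, the off-beam fraction $(1-K(x))U_d(S(x))$ with mean gain $L_g$, and the harvesting window $1-(1-\eta)U_d(S(x))K(x)$ for the external sources $I(x)+O(x)$, with SPS and DPS obtained as minor variations. Your explicit treatment of the two power-splitting modes and your remark that applying $\Theta$ to the time-averaged received power is exact only for the linear harvester are useful additions, but they do not change the argument.
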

For a proof, please refer to Appendix \ref{app:proof_lemma_perjouledelay}.
From its definition and from the service model description, it derives that in downlink, the ideal per-bit delay perceived by a BB UE at $x$ is given by 
\vspace{-0.1in}
\be\label{eq:taubroadband}
\tau_d^{id}(x)=\frac{N_{iot}(S(x))+w_d N_{bb}(S(x))}{w_d C(x,P,G,I)}
\ee
For IoT users, the per-bit delay is given by:
\begin{itemize}
    \item \textbf{TS}:
    \vspace{-0.3in}\be\label{eq:tau_iot_TS}
\tau_{d,I,TS}^{id}(x)=\frac{N_{iot}(S(x))+w_d N_{bb}(S(x))}{(1-\eta) C(x,P,G,I)}
\ee
    \item \textbf{SPS}, \textbf{DPS}:
\vspace{-0.3in}\be\label{eq:tau_iot_SPS}
\tau_{d,I,PS}^{id}(x)=\frac{N_{iot}(S(x))+w_d N_{bb}(S(x))}{ C(x,(1-\nu) P,G,(1-\nu)I)}
\ee
\end{itemize}
For the uplink instead, for broadband users, we have
\vspace{-0.15in}
\be\label{eq:taubroadband_uplink}
\tau_u^{id}(x)=\frac{ N_{iot}(S(x))+\delta_u N_{bb}(S(x))}{\delta_u C(x,P_{I},1,0)}
\ee
For IoT users, the ideal per-bit delay perceived in uplink is $\tau_{u,I}^{id}(x)=\delta_u\tau_u^{id}(x)$.
With $\bar{\tau}_j$ ($\bar{\tau}_{j,I}$), $j=d,u$ we denote the average per-bit delay perceived in downlink (resp. in uplink) by broadband and IoT users, respectively.
\biagio{By leveraging stochastic geometry, we now derive the main analytical results, in terms of probabilistic expressions for user-perceived performance metrics, which account for the random spatial patterns of both BSs and UEs.} 
The following result derives an expression for the average per-bit delay perceived by the typical user (BB or IoT) which is just beginning service.
\bt\label{th:BE_tau} 
The mean ideal per-bit delays in downlink and uplink, and the mean ideal per-Joule delay perceived by a typical best-effort user joining the system are given by:
	\begin{align}
\bar{\tau}_d =&   H(w_d,w_d,C(r,P,G, \bar{I}) )\label{eq:tau_1} \\
\bar{\tau}_{d,I,TS} =& \bar{\tau}_d \frac{w_d}{(1-\eta)} \label{eq:tau_2}\\
\bar{\tau}_{d,I,PS} =&   H(w_d,1,C(r,(1-\nu) P,G, (1-\nu)\bar{I}) ) \label{eq:tau_3}\\
\bar{\tau}_u =&   H\left(\delta_u,\delta_u,C(r,P_{I},1,0) \right) \\
\bar{\tau}_{u,I} =& \delta_u\bar{\tau}_u
	\end{align}
Where:
\begin{equation}\label{eq:Poisson}
H(y,z, g(r))=\int_0^{\infty} \frac{f(r,y)e^{- \lambda_b \pi r^2} \lambda_b 2 \pi r}{zg(r)}   \diff r.
\end{equation}
with
\vspace{-0.1in}
\[
f(r,y)=\lambda_{u}\left[y+\gamma\left(\phi-y\right)\right]\int_0^{\infty} \int_0^{2 \pi}e^{- \lambda_b A(r,x,\theta)} x d \theta d x 
\]
$A(r, x, \theta)$ is given by
$A(r, x, \theta)= \pi x^2 - \bigg[r^2 \arccos\left(\frac{r +  x \sin(\theta)}{d(r, x, \theta)}\right)+x^2\arccos \left(\frac{x + r \sin(\theta)}{d(r, x, \theta)}\right)+\bigg.$

$\bigg.-\frac{1}{2} \sqrt{[r^2-(d(r, x, \theta)-x)^2 ][(d(r, x, \theta) +x)^2-r^2] }\bigg]$, and 
$d(r,x,\theta)$ is the euclidean distance between $(x,\theta)$ and $(0,-r)$. $C(r,P,G, \bar{I})$ is given by \eref{eq:shannon}, with the interference term $\bar{I}$ given by
 \vspace{-0.1in}
 \be\label{eq:interference}
\bar{I}(r,k)=\frac{PL_g\lambda_b2\pi r^{2-\alpha}}{k(\alpha-2)}\frac{\bar{\tau}_d}{\tau_d^0}
\ee


\et
For the proof, please refer to appendix \ref{app:proof_th_BE_tau}.
We say that the considered system is in the \textit{dense IoT regime} when the density of IoT users is such that the probability of having a cell without IoT users is negligible.
{This assumption is coherent with many current projections on expected IoT deployments in 6G, in which the expected device densities are of the order of 10 million devices per km$^2$ and more~\cite{BANAFAA2023245,QADIR2023296,9369324,9509294}.}
\bt\label{th:BE_tau_stdev} In the dense IoT regime, the cumulative distribution function $CDF_h$ of the power harvested by an IoT user who is just beginning service is
 $ CDF_h(h_0)=CDF_r(g^{-1}(h_0))$ for all $h_0\geq 0$, where $CDF_r$ is the cumulative distribution function of the distance of the user to its serving BS:
  \[
 CDF_r(r)=\int_0^{r}e^{- \lambda_b \pi y^2} \lambda_b 2 \pi ydy
 \]
 where $g(r) = \Theta \left (F_j(r)+\frac{Z_j(r)}{f(r,w_d)}\right)$ 
with $j\in \{TS, SPS, DPS\}$, and $f(r,w_d)$ given by \thref{th:BE_tau}.
\begin{itemize}
    \item \textit{TS}:
    \[
F_{TS}(r)=P r^{-\alpha}L_g\frac{\bar{\tau}_d}{\tau_d^0} +k\bar{I}(r,k) +\bar{O}
\]
    \[
Z_{TS}(r)=\frac{\bar{\tau}_d}{\tau_d^0}\bigg[P r^{-\alpha}(G\eta-L_g)
-(1-\eta)(k\bar{I}(r,k) +\bar{O})\bigg]
\]
\[
\bar{O}=\frac{(1-\gamma)\delta_u P_{bb} +\phi \gamma P_I}{(1-\gamma)\delta_u+\phi\gamma}\frac{\lambda_b\pi \alpha }{\alpha-2}\frac{\bar{\tau}_u}{\tau_u^0}
\]
\item \textit{SPS}: $F_{SPS}(r)=\nu F_{TS}(r)$, and $Z_{SPS}(r)=\nu P r^{-\alpha}\frac{\bar{\tau}_d}{\tau_d^0}(G-L_g)$.
\item \textit{DPS}: $F_{DPS}(r)= F_{TS}(r)$, and $Z_{SPS}(r)=P r^{-\alpha}\frac{\bar{\tau}_d}{\tau_d^0}(\nu G-L_g)$.
\end{itemize}

\et
For the proof, please refer to \aref{app:proof_stdev}. $P_{bb}$ and $P_{I}$ denote the transmit power of BB and IoT UEs, respectively.
\begin{theorem}{\textbf{(Maximally fair GPS weights).}} The values of the GPS weight $w_d$ which maximize fairness among users are given by $w_d=\delta_d (1-\eta)$ for the time split (TS) mode, and 
\[
w_d=\delta_d \frac{\log_{2}\left(1 + \frac{(1-\nu) P G \bar{r}^{-\alpha}}{N_0+(1-\nu)\bar{I}(\bar{r},k)}\right)}{\log_{2}\left(1 + \frac{P G \bar{r}^{-\alpha}}{N_0+\bar{I}(\bar{r},k)}\right)}
\]
for the power split modes. $\bar{r}$ is the mean distance of UEs from the serving base station, and $\bar{I}(\bar{r},k)$ is given by \thref{th:BE_tau}. 
\end{theorem}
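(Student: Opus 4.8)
The plan is to read ``maximal fairness'' as the requirement that the two user classes perceive per-bit delays in the prescribed ratio $\delta_d$, i.e. $\bar{\tau}_{d,I}=\delta_d\,\bar{\tau}_d$, so that $\delta_d=1$ equalizes the delays exactly and a general $\delta_d$ encodes the intended weighted service differentiation between IoT and BB terminals --- precisely the role $\delta_u$ plays in the uplink, where $\tau_{u,I}^{id}=\delta_u\tau_u^{id}$ and the uplink GPS weight is simply $\delta_u$ because the uplink carries no active-EH overhead. I would then impose this condition on the closed-form delays and solve for the GPS weight $w_d$, treating $\delta_d$ as the given target. The convenient structural feature to exploit is that the common occupancy factor $N_{iot}(S(x))+w_dN_{bb}(S(x))$ appears identically in the BB delay \eref{eq:taubroadband} and in both IoT delays \eref{eq:tau_iot_TS} and \eref{eq:tau_iot_SPS}, so it cancels when the ratio is formed and only the rate terms survive.

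For the time-switching mode I would divide \eref{eq:tau_iot_TS} by \eref{eq:taubroadband}. Since both classes are served at the full rate $C(x,P,G,I)$ and differ only through the time fractions $w_d$ and $1-\eta$, the capacity cancels pointwise and the ratio collapses to the distance-independent constant $w_d/(1-\eta)$ --- exactly the proportionality already recorded in \eref{eq:tau_2} of \thref{th:BE_tau}. Setting $w_d/(1-\eta)=\delta_d$ then yields $w_d=\delta_d(1-\eta)$ with no approximation, which is the first claim.

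For the power-splitting modes I would form the same ratio from \eref{eq:tau_iot_SPS} and \eref{eq:taubroadband}. The occupancy factor again cancels, but now the IoT terminal is served at the reduced rate $C(x,(1-\nu)P,G,(1-\nu)I)$ while the BB terminal is served at $C(x,P,G,I)$, so the ratio becomes $w_d\,C(x,P,G,I)/C(x,(1-\nu)P,G,(1-\nu)I)$ and is genuinely distance dependent. Because a single weight cannot satisfy fairness pointwise for every $r$, I would enforce it at the representative (mean) link distance $\bar r$, replacing the interference by its mean value $\bar I(\bar r,k)$ from \thref{th:BE_tau}. Solving $w_d\,C(\bar r,P,G,\bar I)/C(\bar r,(1-\nu)P,G,(1-\nu)\bar I)=\delta_d$ for $w_d$ and substituting Shannon's law \eref{eq:shannon}, whose prefactor $B/k$ cancels in the quotient, gives precisely the stated logarithmic ratio.

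The main obstacle --- and the only step beyond algebra --- is this distance dependence: in the PS modes the fair weight is not constant across the cell, so the closed form rests on collapsing the capacity quotient to its value at $\bar r$. I would justify this either by evaluating the delay ratio at the typical user's mean distance, or equivalently by carrying the ratio of the two $H(\cdot)$ integrals of \thref{th:BE_tau} directly and noting that they share the identical weighting $f(r,w_d)e^{-\lambda_b\pi r^2}\lambda_b 2\pi r$; pulling the slowly varying capacity terms out at $\bar r$ then lets the common integral cancel and recovers the same expression. It is worth flagging explicitly that no such approximation is needed for TS, where the capacities cancel identically and the result holds exactly.
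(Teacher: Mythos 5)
Your proposal is correct and follows essentially the same route as the paper: you impose the target service ratio $\delta_d$ between the two user classes (phrased via per-bit delays rather than bits received, which is the same condition since the occupancy factor cancels), obtain the exact distance-independent relation $w_d=\delta_d(1-\eta)$ for TS, and for the PS modes evaluate the distance-dependent capacity quotient at the mean user distance $\bar r$ with mean interference $\bar I(\bar r,k)$, exactly as the paper does. Your extra remark on justifying the $\bar r$ collapse via the common weighting in the $H(\cdot)$ integrals is a minor refinement the paper does not spell out, but the argument is otherwise identical.
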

\begin{proof}
In our GPS scheduler, $w_d$ is the ratio between the amount of time dedicated to serving a BB UE and the one dedicated to serving an IoT user (for both information transfer and power transfer).
We need to ensure fairness in resource scheduling for information transfer between BB and IoT users (as for power transfer, only IoT users are involved). This is implemented by having the ratio between the (mean) amount of bits received by a BB user and an IoT user be equal to $\delta_d$, i.e. to the ratio between the target values of throughput of the two user classes.  
Thus we have, for the TS case, in any location at a distance $r$ from the serving base station,
\[
\delta_d=\frac{w_d R(r)}{(1-\eta)R(r)}
\]
where with $R(r)$ we denote the amount of bits received by the user at $r$ during the time that its serving bases station dedicates to it. We recall that $(1-\eta)$ is the GPS weight dedicated to information transfer to an IoT UE. This brings to $w_d=\delta_d(1-\eta)$.
For the PS cases, we proceed in the same manner. We write down the ratio between the amount of bits received from the serving base station, using the Shannon capacity formula: 
\[
\frac{w_d\log_{2}\left(1 + \frac{P G r^{-\alpha}}{N_0+I(r,k)}\right)}{\log_{2}\left(1 + \frac{(1-\nu) P G r^{-\alpha}}{N_0+(1-\nu)I(r,k)}\right)}
\]
However such a ratio is a function of $r$, and of the interference experienced by the specific considered user. Thus, we consider the mean user, and set $w_d$ in such a way as to satisfy the equality
\[
\delta_d=\frac{w_d\log_{2}\left(1 + \frac{P G \bar{r}^{-\alpha}}{N_0+\bar{I}(\bar{r},k)}\right)}{\log_{2}\left(1 + \frac{(1-\nu) P G \bar{r}^{-\alpha}}{N_0+(1-\nu)\bar{I}(\bar{r},k)}\right)}
\]
where we approximate the mean interference perceived by the user with the expression in \thref{th:BE_tau}.
\end{proof}

\section{Energy-Optimal Network Configuration}
\label{sec:optimizing}

One of the main open issues in SWIPT networks is to determine, as a function of the main system characteristics as well as of the energy and traffic demands, how the main network parameters should be tuned in order to optimize the energy consumed by the system.
To this end, in this section we elaborate the formulation of the optimization problem which provides, for a given BS energy model, as well as for a given user mean density, the energy optimal BS transmit power, the optimal density of active BSs, as well as the optimal amount of BS time dedicated to power transfer, which satisfy the specified performance constraints.\\  
The objective function is given by the mean power per unit surface consumed by the network, as a function of the BS density, of the BS transmit power, and of the time/power switch ratio. It is thus derived as the product of the power consumed by a BS (expressed in Equation \eref{eq:energymodel}) and of the mean BS density. Note that, by the definition of per bit delay, $U_d=\frac{\bar{\tau}_d(\eta,P,\lambda_b)}{\tau^0_d}$.
The optimization problem which can be formulated with such an objective function clearly depends also on the EH operating mode. In the TS case, we have the following optimization problem formulation:
\begin{problem}{(\textbf{TS operating mode}) }\label{Prob:Opt_1}
\begin{equation}\label{eq:objective_function}
\underset{P,\lambda_b,\eta}{\text{minimize}}\
 \lambda_{b}\left[q_1 +\frac{\bar{\tau}_d(\eta,P,\lambda_b)}{\tau^0_d}\left(q_2+q_3(P-P_{min})\right)\right]
\end{equation}
\vspace{-0.1in}
Subject to:
\vspace{0.1in}
\begin{center}
\begin{varwidth}{\textwidth}
\begin{enumerate}[label=(C\arabic*)]
    \item $ \frac{\bar{\tau}_d(\eta,P,\lambda_b)}{\tau^0_d}\leq 1,\quad \frac{\bar{\tau}_u(\eta,P,\lambda_b)}{\tau^0_u}\leq 1$	 \label{eq:constraint1_downlink}
    \item $ P_{\text{min}}\leq P \leq P_{\text{max}}$ \label{eq:constraint3}
    \item $ 0\leq \eta \leq 1$ \label{eq:constraint4}
    \item $ CDF_h(h_0,\eta,P)\leq \mu$ \label{eq:constraint_charging0}
    \item $ 0\leq \lambda_b \leq \lambda_{b,\text{max}}$ \label{eq:constraint_charging1}
    \end{enumerate}
\end{varwidth}
\end{center}
\vspace{0.1in}
\end{problem}
Constraints C1 require the mean BS utilization in downlink and uplink to be smaller than one. The expression of the ideal per-bit delay is the one from \thref{th:BE_tau}.
Constraints C2 and C3 define the range of acceptable values for the BS transmit power and the time split ratio, respectively. Constraint C4 requires that the probability for an IoT device to harvest less power than the minimum required to operate (denoted as $h_0$, and accounting for the amount of energy required by the device during a whole on-off cycle) be less than a given maximum acceptable value $\mu$. 
The expression of the cumulative distribution function is the one derived in \thref{th:BE_tau_stdev}.
Finally, constraint C5  derives from practical limitations to the maximum density of BS deployments in realistic urban settings, which in any realistic scenario cannot exceed a maximum value. 
\\
\gianluca{In the power splitting EH operating modes (both static and dynamic), the resulting optimization problem takes a very similar structure. Constraint C3 is replaced by
\begin{equation}
\begin{aligned}
\mathrm{(C3a)} \qquad & 0\leq \nu\leq 1
\end{aligned}    
\end{equation}
The other constraints and the objective function remain the same, except that the formulas for $CDF_h$ and $\tau_d$ are those specific to the SPS and DPS modes.}


In all formulations, however, the optimization problem is non-convex and nonlinear, and it cannot be solved efficiently. This is since $\bar{\tau}_d$ (which appears in the expression of both the objective function and in constraints C1) is a nonlinear and non-convex function of $P$, $\lambda_b$, and $\eta$ (or $\mu$). Similar consideration holds also for the expression of $CDF_h(h_0,\eta,P)$ from \thref{th:BE_tau_stdev}, which is nonlinear and non-convex in the three optimization variables.

\subsection{GA Metaheuristic}
One of the possible heuristic approaches to solving Problem 1 is based on an exhaustive grid search and on defining a discrete set of values (e.g., based on practical considerations regarding the maximal accuracy with which they can be varied in practical scenarios) for each of the three decision variables. However, such a brute-force approach becomes quickly unfeasible when the decision variables must be quantized with fine granularity. In this cases, a metaheuristic is required~\cite{rajwar2023exhaustive}. Among these, those based on Genetic Algorithms (GA) \cite{mitchell1998introduction} are particularly well suited for their ability to avoid getting trapped in local minima by maintaining a population of potential solutions. In addition, they are amenable to parallel implementations, and despite not being tied to a specific problem structure, they can be adapted to the specific problem for better efficiency. \biagio{Indeed, as represented in Algorithm \ref{algo:ga}, the GA considers a population of $n$ chromosomes}, each constituting the binary representation of a triplet of values for the three decision variables, is evaluated by assigning the value associated with a fitness function. Such a value is a score indicating the optimality of the solution, and it is used to feed the archive of the best solutions found at a given point in time by the heuristic. This is implemented by picking up the dominant one (\emph{i.e.}, the chromosome with the highest fitness value). Then, a new population is generated using proper genetic operators. \biagio{For the selection, we leveraged classical stochastic uniform selection  \cite{chipperfield1995matlab}, which preserves diversity in the population and helps avoiding premature convergence to suboptimal solutions. Then, we utilize Gaussian Mutation \cite{489178} as the default mutation operator, while the Crossover Heuristic \cite{chipperfield1995matlab} mechanism generates offsprings positioned along the line connecting the two parents at a distance of $ratio$, favoring the direction away from the parent with lower fitness.}
\begin{algorithm}[t]
\caption{Genetic Algorithm for Solving Problem 1}
\begin{algorithmic}[1]
\State Initialize the population $W$ with $n$ chromosomes, where each chromosome represents a binary-encoded triplet $(P, \lambda_b, \eta)$ corresponding to the decision variables;
\Repeat
    \State Evaluate the fitness of each chromosome in $W$ using the penalty-based fitness function (Eq. \ref{eq:fitnessfunction});
    \State Select parents for mating based on their fitness values using \textit{stochastic uniform selection};
    \State Generate an offspring by using \textit{Crossover Heuristic}; 
    \State Mutate the offspring using \textit{Gaussian Mutation};
    \State Evaluate the new offspring and integrate them into the population $W$ by replacing the least fit individuals;
\Until{Iterations = = $maxGen$ OR no improvement in the best fitness value for $i$ consecutive iterations.}
\end{algorithmic}
\label{algo:ga}
\end{algorithm}
\biagio{The algorithm iterates for a maximum of $maxGen$ generations or stops earlier if the termination condition is met, defined as no improvement in the best fitness value for $i$ consecutive generations. In this way, guided by the fitness function, the GA realizes an optimistic exploration of the solution space based on the evolutionary operators' output.}

\biagio{To facilitate the convergence of the GA, the fitness function is defined through the penalty method \cite{bazaraa2013nonlinear}, by which, in addition to the cost function of Problem \ref{Prob:Opt_1}, a penalty is introduced as a linear combination of the constraints, in such a way that the fitness function value increases proportionally to the degree by which constraints \ref{eq:constraint1_downlink} and \ref{eq:constraint4} over the utilization and minimum harvested power are violated. This ensures that the resulting metaheuristic can also explore those values of the decision variables that are very close to the border of the feasibility region of Problem 1. The resulting expression for the fitness function is as follows:}
\begin{align}
\label{eq:fitnessfunction}
v(P,\lambda_b,\eta) =& \lambda_{b}\left\{q_1 +\frac{\bar{\tau}_d(\eta,P,\lambda_b)}{\tau^0_d} [q_2+q_3(P-P_{min})]\right\}
+ k_1 \left(\frac{\bar{\tau}_u(\eta,P,\lambda_b)}{\tau^0_u}-1\right)^{+}+\\
-& k_2[CDF_h(h_0,\eta,P)-\mu]^{-}+k_3\left(\frac{\bar{\tau}_d(\eta,P,\lambda_b)}{\tau^0_d}-1\right)^{+} \nonumber
\end{align}
where $(x)^{+}=max(x,0)$, and $(x)^{-}=min(x,0)$. 
\biagio{$k_1$, $k_2$, and $k_3$ are nonnegative weights, assigned empirically to facilitate the efficient exploration of the border of the feasibility region. In our experiments, we have verified that by choosing for the three weights a value of the same order of magnitude as the maximum value of the objective function in Problem 1, i.e. equal to $\lambda_{b,max}[q_1+q_2+q_3(P_{max}-P_{min})]$ is sufficient to significantly enhance both convergence consistency and solution quality with respect to the approach without penalty, effectively accelerating the convergence of the algorithm.}

\section{Numerical Results}
\label{sec:numerical}
In this section, we validate numerically our analytical results, and we investigate the properties of different strategies for energy efficiency in SWIPT networks.
\subsection{Setup}
We assume base stations work in the $1.5$ GHz band and use a bandwidth of $50$ MHz, compatible with 4G+ standards.
Unless otherwise indicated, we assume a percentage of IoT devices equal to $80\%$ of the total number of UEs (representative of many present-day scenarios), a transmit power equal to $0.2$ W for both IoT and BB UEs, and a frequency reuse factor of $3$. Moreover, by default, we assume a beamforming gain equal to $10$, constant over the whole main lobe aperture (which we assume to be equal to $45$ degrees), and a path loss exponent equal to $3$, typical of urban areas. 
For linear energy harvesting, we assume a conversion efficiency of $0.9$, with no lower/upper threshold. 
We assume the deployed base stations to be of the macro type, with a transmit power that varies between $1$ and $11$ W. Unless otherwise stated, we set a target mean per-bit delay in downlink for BB (resp. IoT) UEs equal to $10^{-5}$ s (resp. $10^{-3}$ s), and in uplink equal to $10^{-4}$ s for all UEs,  (e.g., typical of IoT systems for environmental monitoring \cite{lu_energy_2021}). We consider the user density to vary from $10^{-4}$ users per $m^2$ (modeling settings with a high share of BB users) to $10^{-1}$ users per $m^2$ (modeling scenarios with crowds of BB UEs and high-density IoT deployments). Unless otherwise specified, we assume IoT UEs to be active all the time ($\phi = 1$).
We set to $5\%$ the maximum acceptable share of IoT users who are not able to harvest the target minimum energy.

The parameters of the BS energy model are chosen to fit two different types of BSs. The first type (labeled LLP – low load proportionality) reflects the behavior of the majority of current stand-alone BSs, and is characterized by a $27\%$ load proportionality (with $q_1 = 1100$, $q_2 = 100$, and $q_3 = 30$). Conversely, the high load proportionality (HLP) BS type (with $q_1 = 482.3$, $q_2 = 48.23$, and $q_3 = 144.69$) corresponds to a $75\%$ load proportionality, achievable, e.g., through time-domain duty-cycling at the sub-system level, i.e., through micro-sleep techniques involving modules of the BS or of the BBU in cloud-RAN designs \cite{lin2021data}. For the sake of comparison, these parameters were chosen to fit a per-BS maximum consumed power of $1500$ W, typical of stand-alone macro BSs \cite{ge2017energy}.
The value of BS density has been varied with a granularity of $10^{-4}$ $m^{-2}$, to guarantee good accuracy of the GA search process. We assume the mean number of users per base station to be lower bounded by $5$. This models the simple energy-saving strategy common among MNOs, which switch off those BSs that serve very few users to no user at all, as they represent a very high energy cost per user and a small benefit for performance.    
The size $n$ of the initial population in the GA algorithm was set to $100$ chromosomes. This choice proved to be a good compromise between computational load and convergence speed.
In the fitness function, parameters $k_1$, $k_2$ and $k_3$ were set to $1,000$, $5,000$, and $1,000$, respectively, as these values proved appropriate to enable our GA approach to explore effectively the borders of the feasibility region. 
\biagio{The termination condition of the GA was set based on the convergence of the fitness value. Specifically, we assume that convergence is reached when the geometric average of the relative change in the value of the spread over ($ i = 10$) generations is less than $10^{-6}$, and the final spread is less than the mean spread over the past $i$ generations. We also set a generation limit $m$ to 150 to prevent potential infinite iterations. This value was chosen based on an empirical evaluation of the typical number of iterations required across different starting points. Regarding the Crossover Heuristic, we choose a $ratio = 1.9$, to have a good distance from parents, and prevent the algorithm from easily getting stuck at a local minimum.} 

\subsection{Computational Complexity}
\biagio{
As for the time complexity of the proposed GA, in the worst case scenario, if $n$ iterations are required per each population of $m$ chromosomes, and if with $T_{sim}$ we denote the time to compute the value of the function modeled by the chromosome and to perform the estimation of the fitness value, the time complexity of our algorithm is $\mathcal{O}(mnT_{sim}$). In our setup, consisting in an Intel Core i5 6-core - used in parallel - processor and 16 GB 2667 MHz DDR4 RAM, $T_{sim}$ has been always less than 2s.
However, we have empirically observed that in $x$ tries of the GA, out of the number of points evaluated for all the six configurations analyzed, our termination condition always brought the GA to converge before the maximum number of iterations. This confirmed the validity of our heuristic choice of the three constants $k_1$, $k_2$, and $k_3$, 
as it is hard to model analytically the mutual relationships between these constants and their impact on the resulting value assumed by the overall function.}
\subsection{Linear Energy Harvesting}
To assess the accuracy of our results, we simulated the system for several values of user and BS densities, in the linear EH case. 
For each of the three power harvesting configurations, and over different values of user density, \fref{fig:power_consumed_validated} shows the values of power per $km^2$ consumed by the network at the optima derived by our GA, as well as those derived by simulating the system in those optima.\\
As the figure shows, the power consumed by EH IoT users has a strong impact on the total power consumed by the network, irrespective of the degree of energy proportionality of BSs. Indeed, at low user densities, consumed power more than doubles when the target minimum harvested power passes from $1$ to $6$ mW. At high user densities instead, the difference is smaller because the power consumed to provide information transfer dominates the overall energy footprint of the network. This is due also to the growing inefficiency of information transfer at high user densities, due to rising interference levels.\\  
\fref{fig:power_consumed_validated}
 suggests also that our modeling approach is extremely accurate across very diverse system configurations. Indeed, the markers reporting simulation results overlap the curves obtained with our analytical model. In particular, very good accuracy is achieved even for low values of user density, for which the dense IoT regime assumption of Theorem 2 does not hold.\\
\begin{figure*}[t!]
\centering
\subfloat[Time Switching]{%
\includegraphics[width=0.33\columnwidth, height=1.5in]{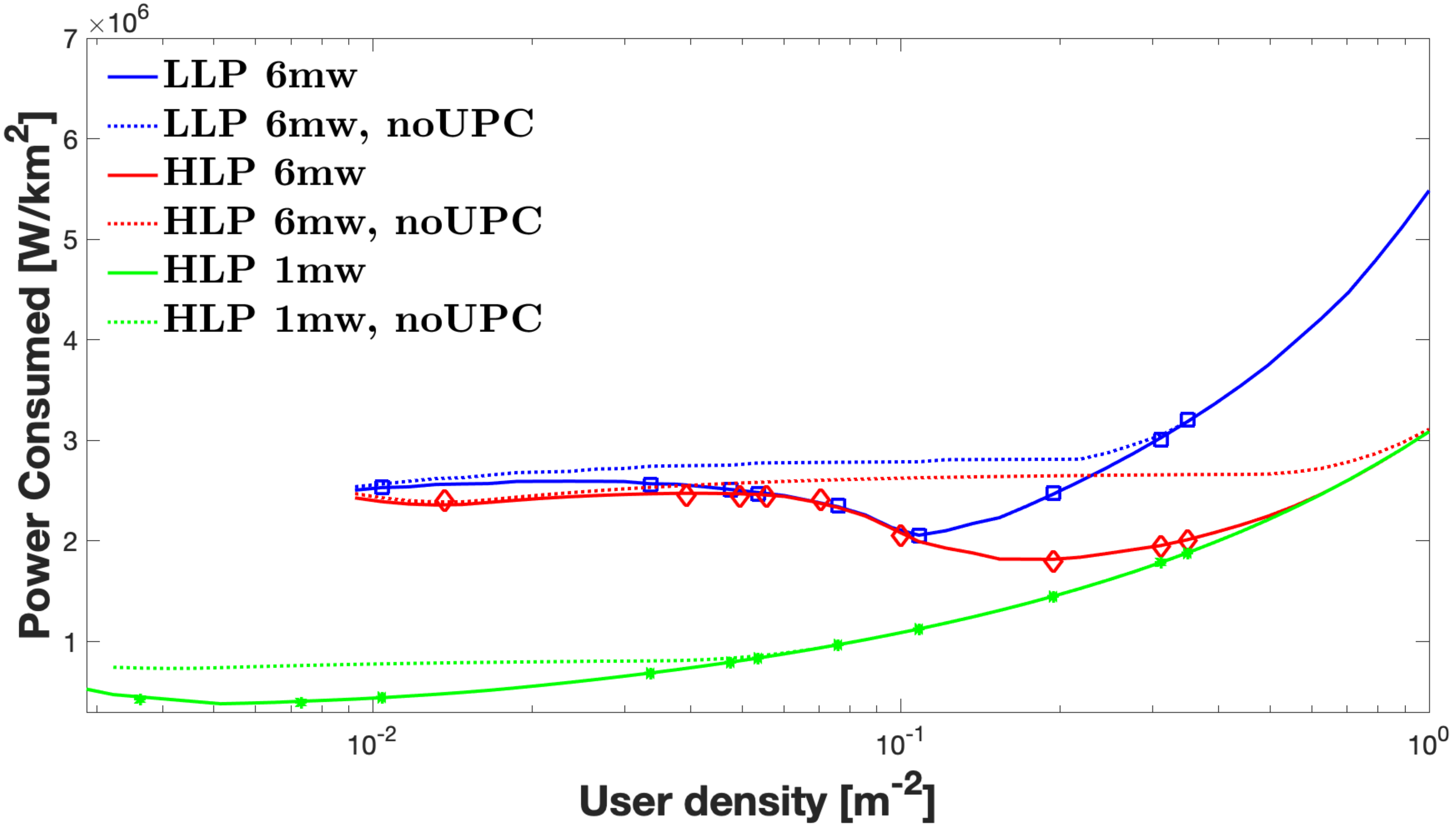}\label{fig:power_consumed_validated_ts}%
}\hspace{-0.05 in}
\subfloat[Dynamic Power Splitting]{%
\includegraphics[width=0.33\columnwidth, height=1.5in]{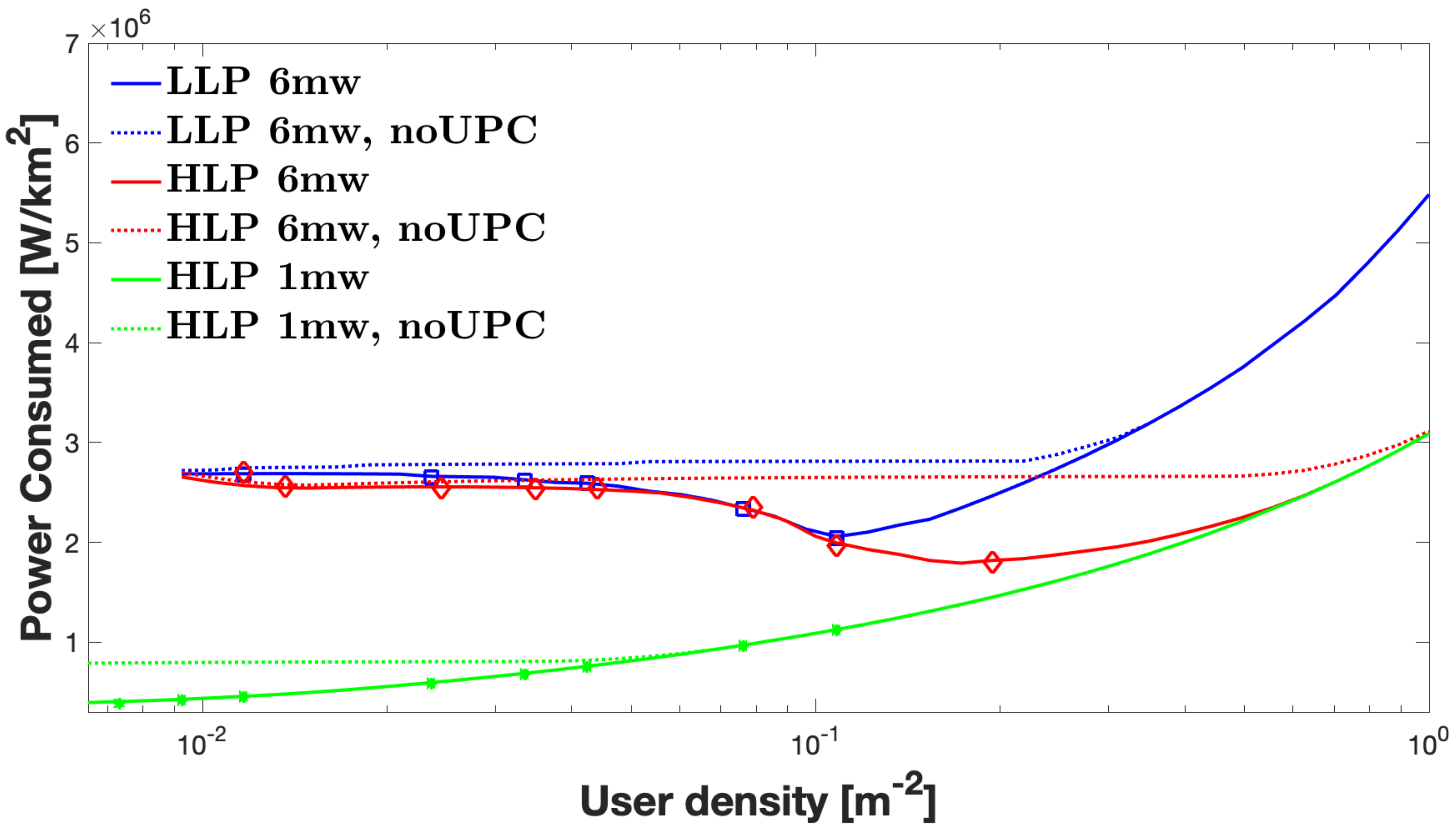}\label{fig:power_consumed_validated_dps}%
}\hspace{-0.05 in}
\subfloat[Static Power Splitting]{%
\includegraphics[width=0.33\columnwidth, height=1.5in]{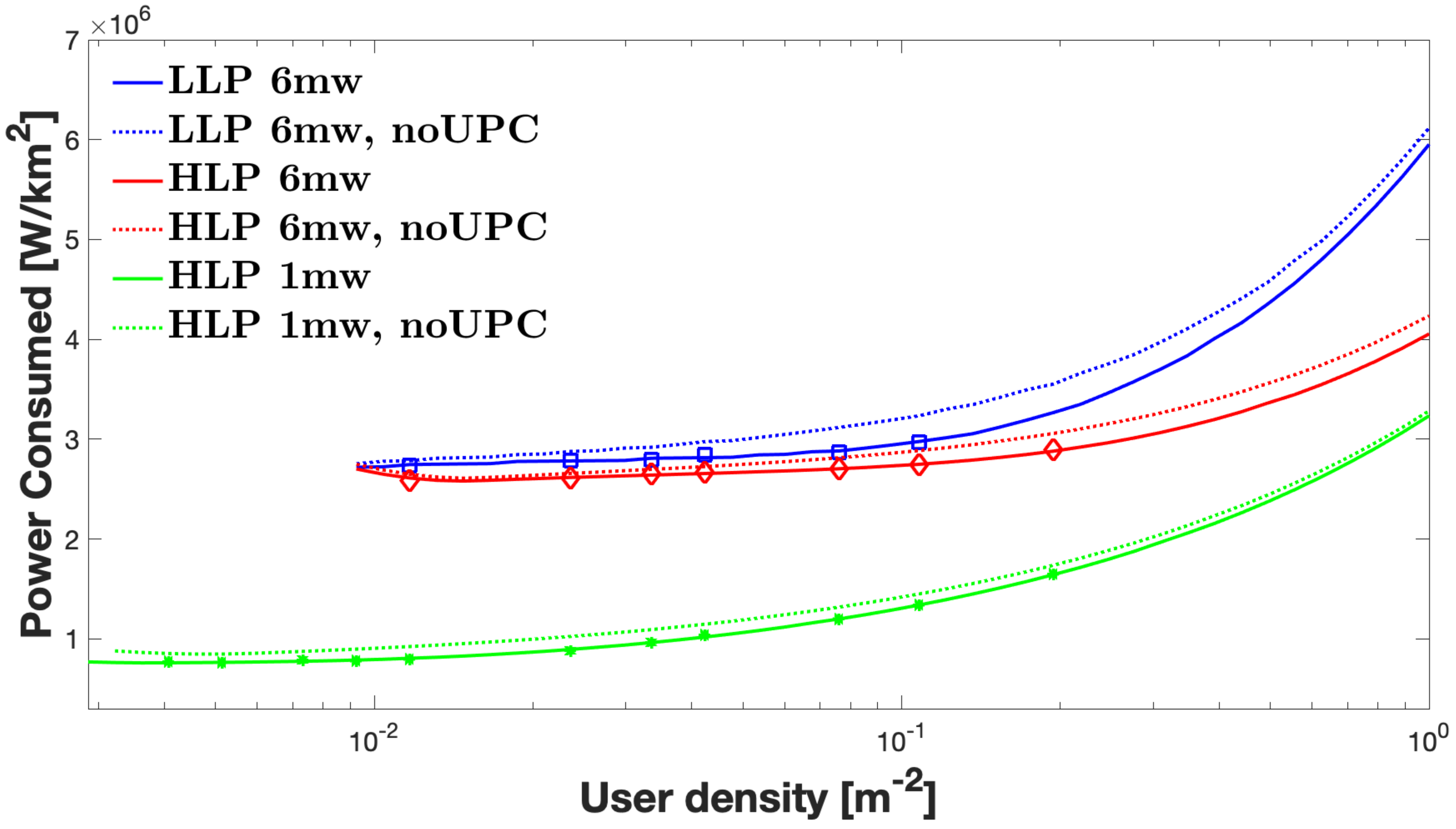}\label{fig:power_consumed_validated_sps}%
}
\caption{Power per $km^2$ consumed by the network at the optimum vs user density, for different target minimum harvested power and IoT UE receiver configurations. Markers denote values from simulations,  derived with a $95\%$ confidence interval of at most $8\%$.\normalsize}
\label{fig:power_consumed_validated}
\end{figure*}
These results suggest that, in sharp contrast to RANs delivering only connectivity, as the density of IoT users increases, their contribution to the service capacity of the whole SWIPT network (in the form of power delivered to other IoT users) increases too. The effect is visible in all configurations when comparing the setting in which the energy harvested originates only from BS transmissions (the "no UPC" curves in \fref{fig:power_consumed_validated}),
with those in which users can harvest energy also from transmissions of other users.
In all configurations, we find that as user density increases, its effects manifest as a decrease in the power consumed by the network. Such a decrease continues until a minimum is reached, after which the power consumed increases again with increasing user density. Such behavior is quite surprising, but is the result of the interplay between, on the one side, the detrimental effect of rising interference on communications (which decreases the efficiency with which network resources are used), and on the other, the beneficial effect of the increase in user transmissions on the amount of power harvested by IoT users. When increasing user density from very low values,  these two contrasting effects cause the overall power consumed by the network to decrease, before increasing again due to the effects of high levels of interference on user-perceived QoS for data. The impact of user contribution to network service capacity is less marked when the target minimum service capacity is low (e.g., in the $1$ mW plots), due to the low impact of power delivery on the overall consumed power of the network, and it increases for higher target minimum harvested power. Such impact is limited also in the SPS mode, due to the lower efficiency of passive EH in that configuration.\\  
 \begin{figure}[t!]
\centering
\begin{minipage}{.45\textwidth}
  \centering
\includegraphics[width=\columnwidth]{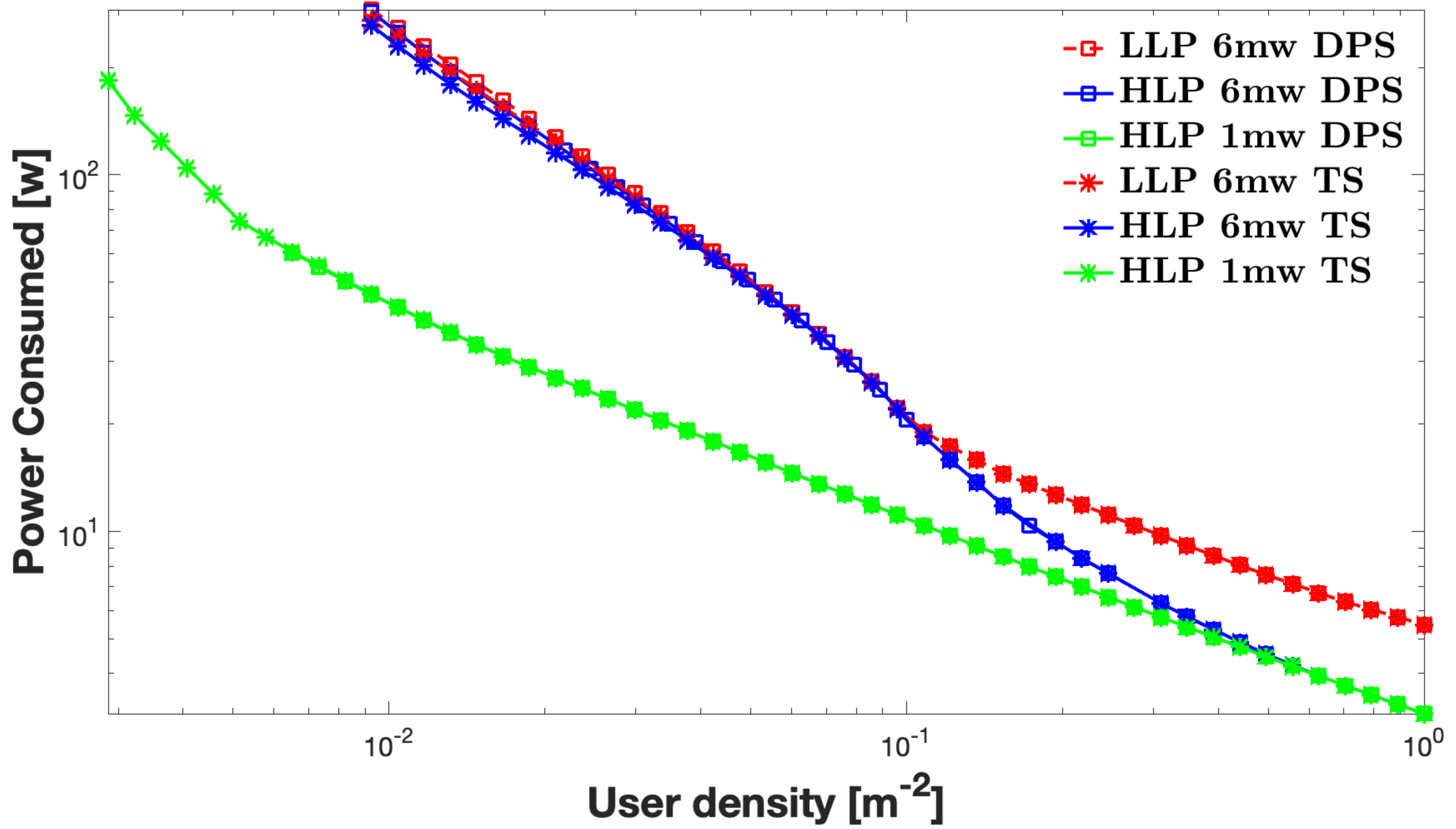}
\caption{Power consumed by the network for each user (broadband or IoT) vs. user density, for different target minimum harvested power and IoT UE receiver configurations.\normalsize}
\label{fig:powerconsumed_peruser}%
\end{minipage}%
\hspace{0.2in}
\begin{minipage}{.45\textwidth}
  \centering
  \includegraphics[width=\columnwidth]{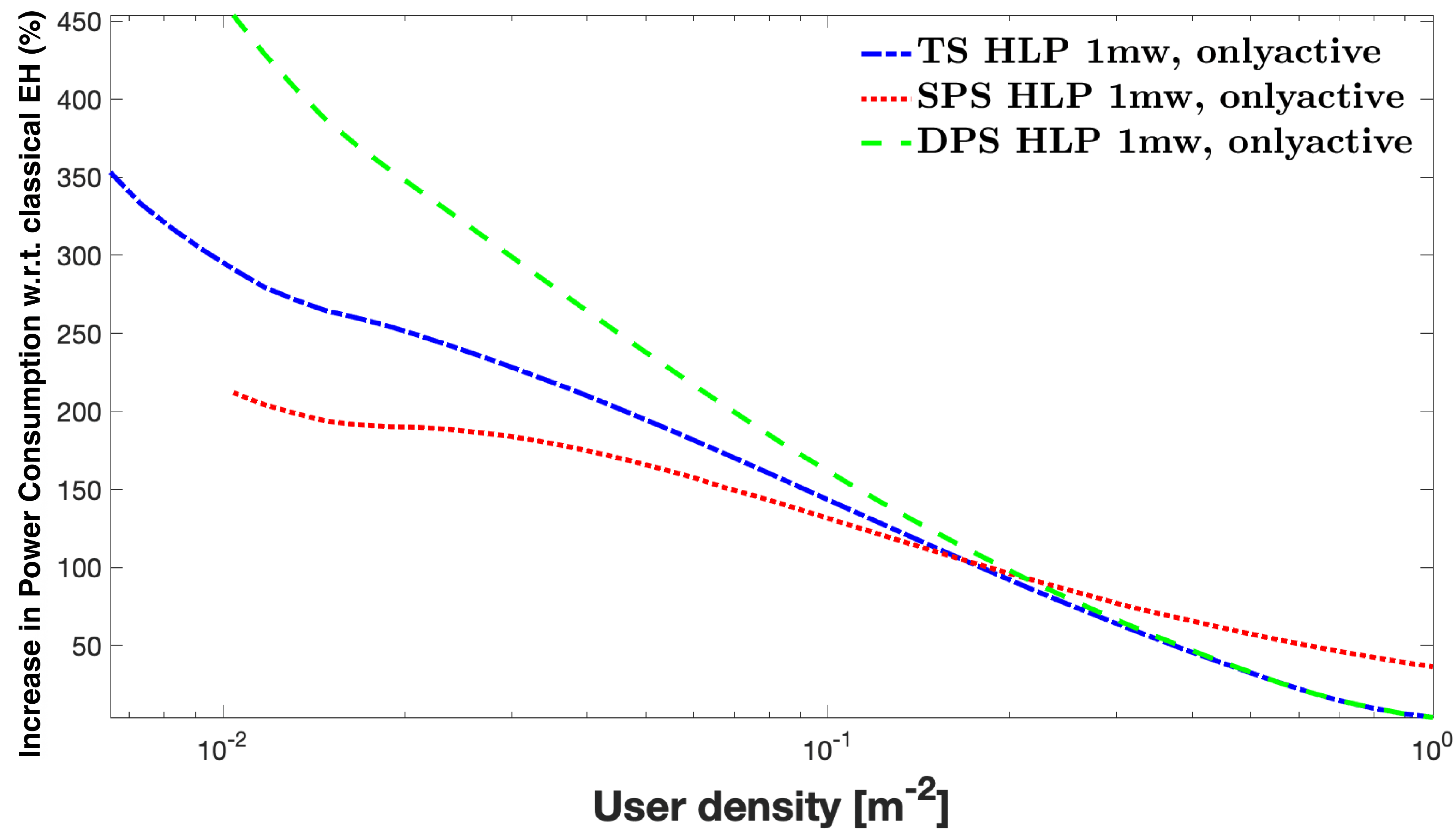}
\caption{ Increase in the power per $km^2$ consumed by the network with respect to linear EH, as a function of user density, for HLP $1$ mW target minimum harvested power, and for different IoT UE receiver configurations.\normalsize}
\label{fig:comparison_onlyactive}%
\end{minipage}
\end{figure}
These features of optimal power consumption suggest that (possibly dynamic) configuration tuning holds the potential to play a key role in minimizing the energy footprint of a SWIPT network. Indeed, these features suggest that by taking advantage of user-provided service capacity, it is possible to make the network increasingly more energy efficient when the demand (of both energy and communication) it must satisfy grows. This is suggested also by \fref{fig:powerconsumed_peruser}, which shows that for all configurations and target minimum harvested power, the network energy efficiency increases when user density increases. Here too, when the user-provided service capacity 
dominates the overall energy footprint of the network, the decrease of per-user energy consumed is faster, while it tends to slow down for those user densities for which the effects of interference dominate. Quite surprisingly, for the same target minimum harvested power, when user-provided service capacity dominates, the power per user consumed by the network does not depend on the amount of load proportionality of the BS energy model (red and blue curves are very close). It is only when interference dominates that the higher BS densities required to satisfy the target QoS for information transfer bring an increase in the impact of the fixed energy costs of BSs on the overall energy footprint of the network.\\  
The importance of passive energy harvesting in SWIPT networks is also visible from \fref{fig:comparison_onlyactive}, which shows the increase in the network consumption when only active charging from the serving BS is available. Indeed, when passive EH is not available (e.g., due to limitations in the HW architecture of devices) the energy consumed by the network can be several times larger than in the case in which passive EH is available.\\
Another noteworthy feature of a SWIPT network emerging from \fref{fig:power_consumed_validated} is that for each configuration of EH receiver architecture and target minimum harvested power, there is a value of user density below which no feasible solution of Problem 1 exists. This is due to the lower bound to the mean number of users per base stations (equal to $5$ in our evaluations). Indeed, as with increasing distance from the serving base station the power received by users decays much faster than channel capacity, guaranteeing a target QoS for harvested power at low user densities requires ever increasing BS densities and thus very low mean number of users per BS, up to the point of economic unfeasibility. 
Note however that this is relative to a scenario in which users are distributed uniformly at random in space. In realistic settings, this implies that deployments of small amounts of EH IoT devices are of course feasible, but they should be, at least locally, dense enough to be able to deliver power with the target QoS while satisfying the system constraints.
%
\begin{figure*}[t!]
\centering
\subfloat[Time Switching]{%
\includegraphics[width=0.33\columnwidth]{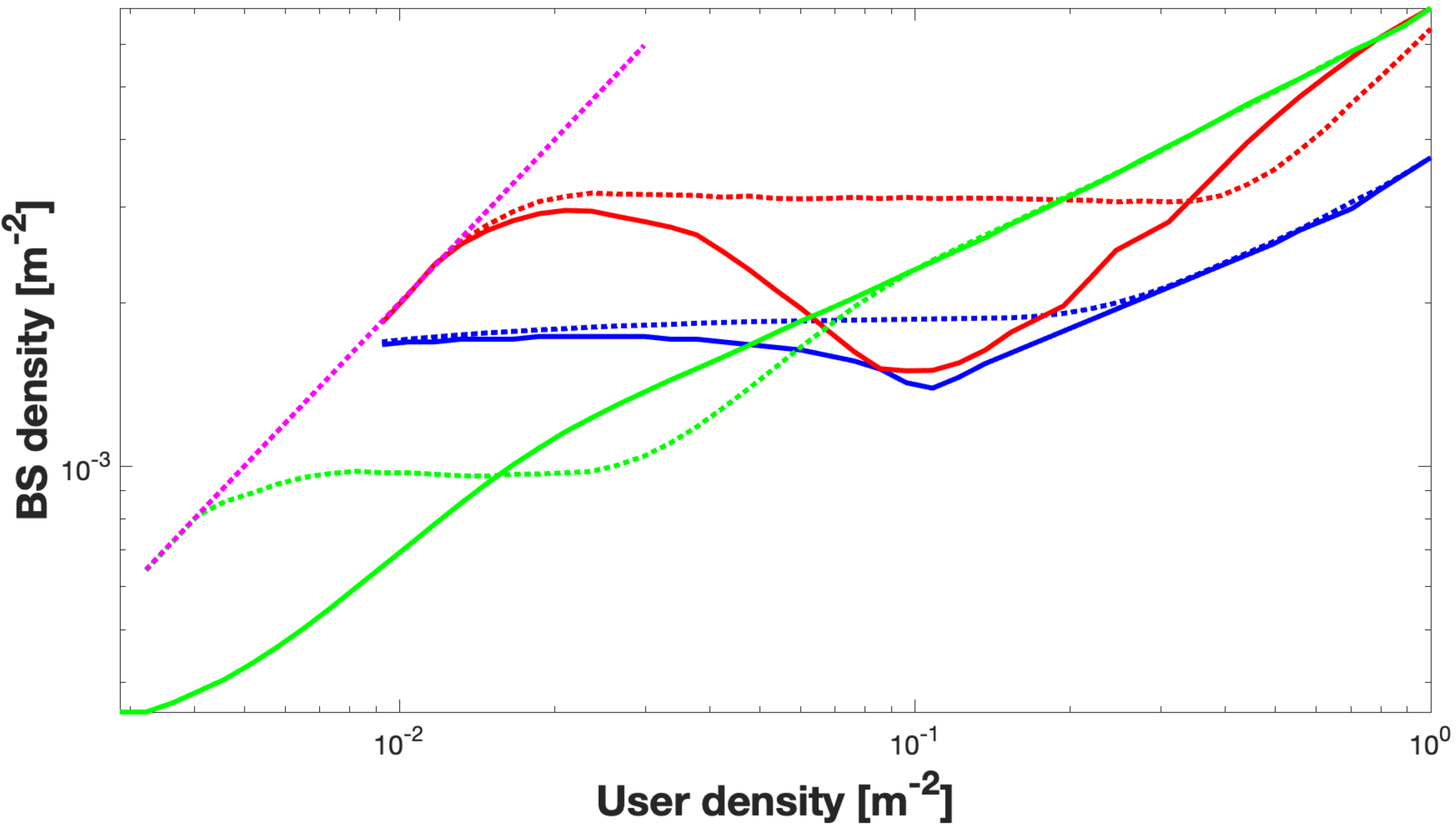}\label{fig:optimal_bsdensity_ts}%
}\hspace{-0.05 in}
\subfloat[Dynamic Power Splitting]{%
  \includegraphics[width=0.33\columnwidth]{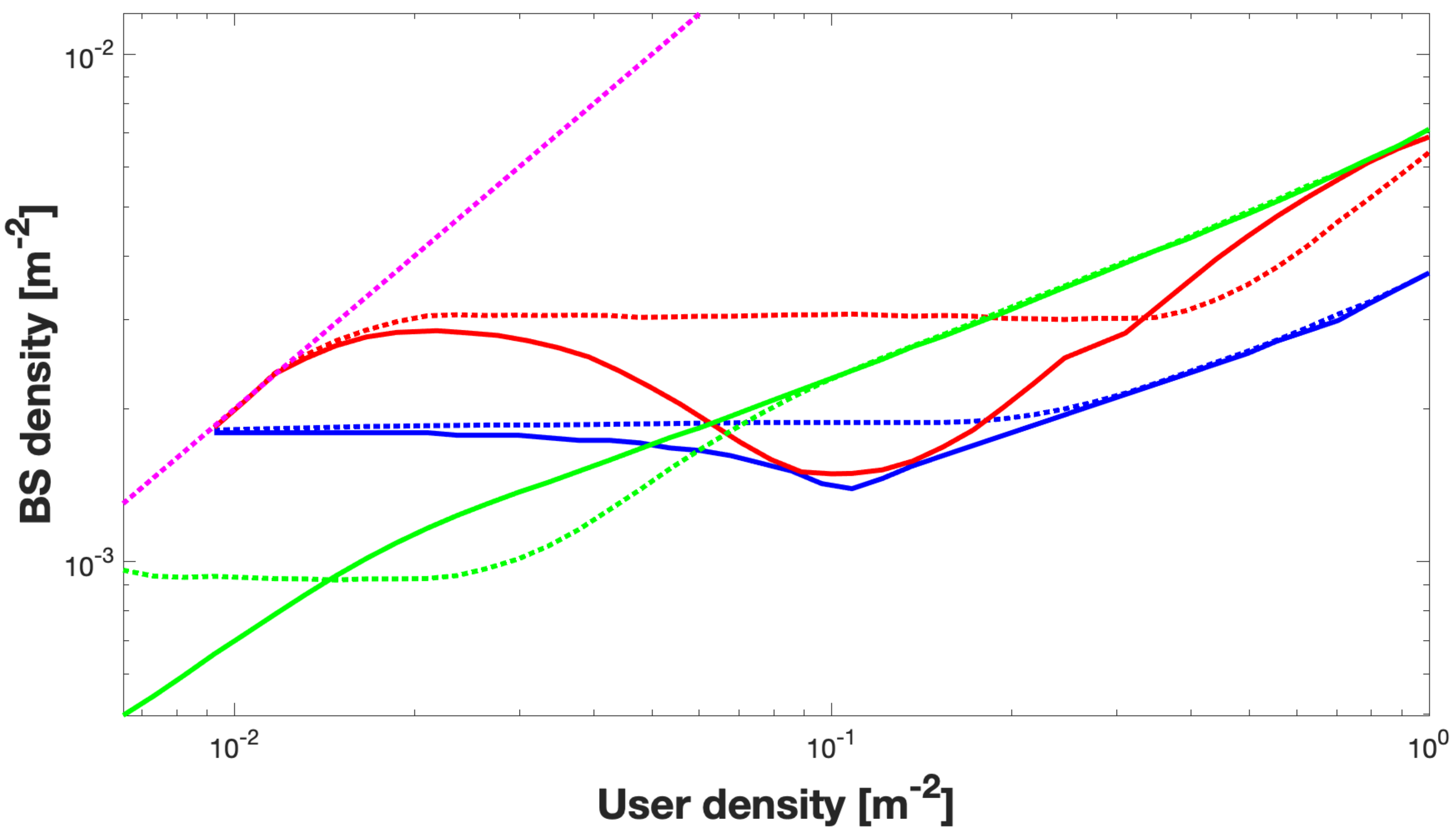}\label{fig:optimal_bsdensity_dps}%
}\hspace{-0.05 in}
\subfloat[Static Power Splitting]{%
  \includegraphics[width=0.33\columnwidth]{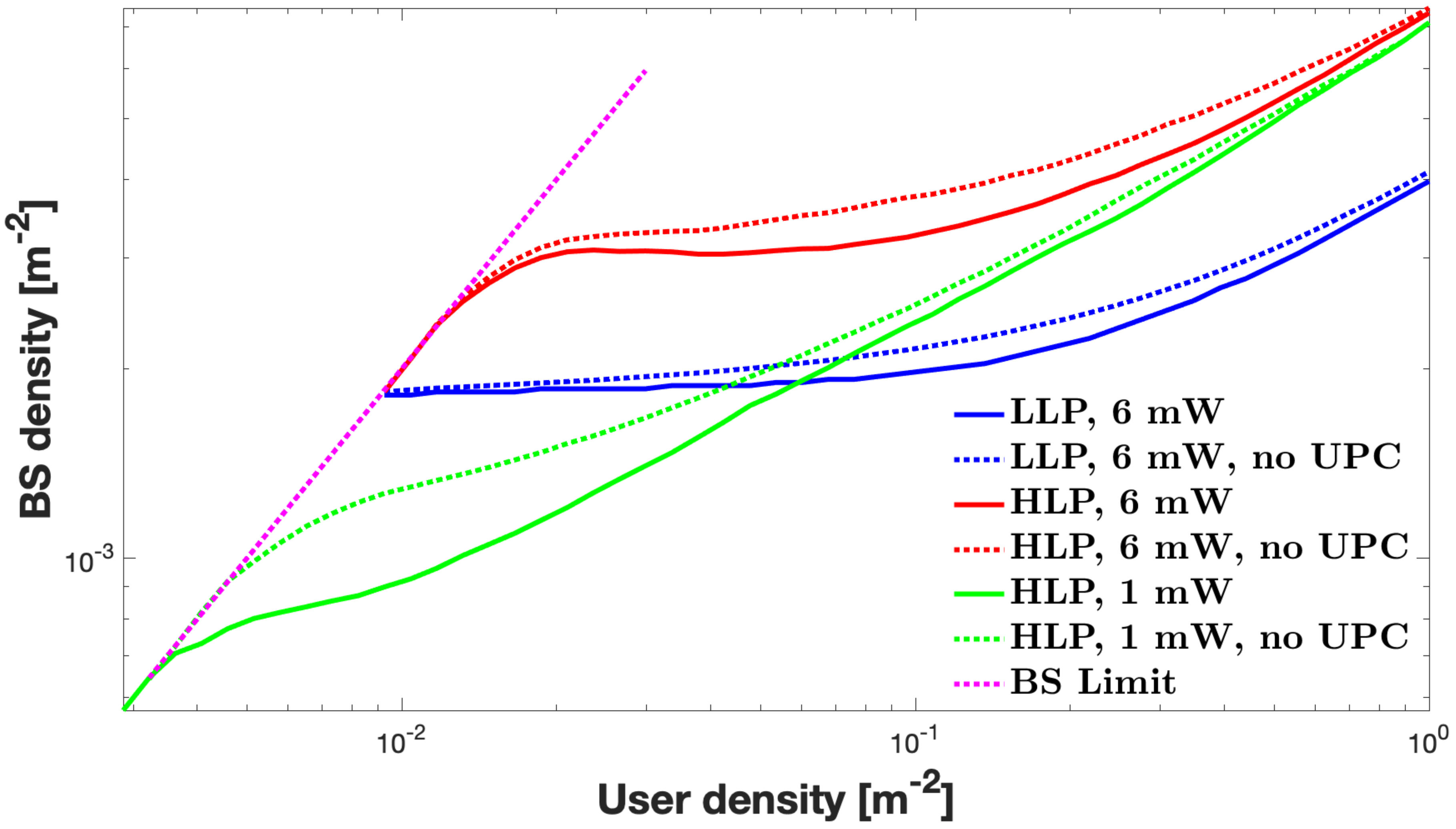}\label{fig:optimal_bsdensity_sps}%
}
\caption{Optimal BS density for different target minimum harvested power and IoT UE receiver configurations.}
\label{fig:optimal_bsdensity}

\centering
\subfloat[Time Switching]{%
\includegraphics[width=0.33\columnwidth]{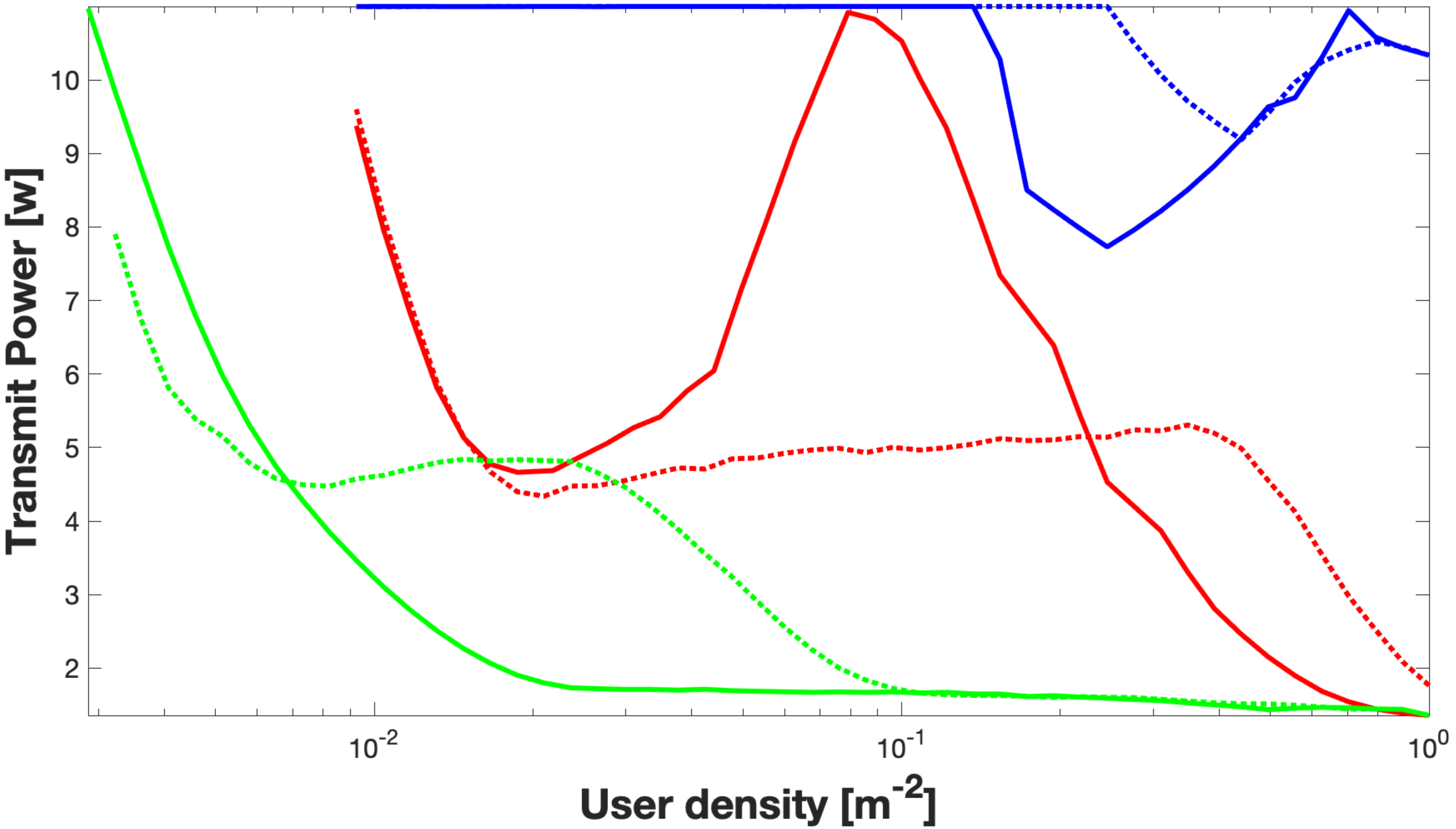}\label{fig:optimal_txpower_ts}%
}\hspace{-0.05 in}
\subfloat[Dynamic Power Splitting]{%
  \includegraphics[width=0.33\columnwidth]{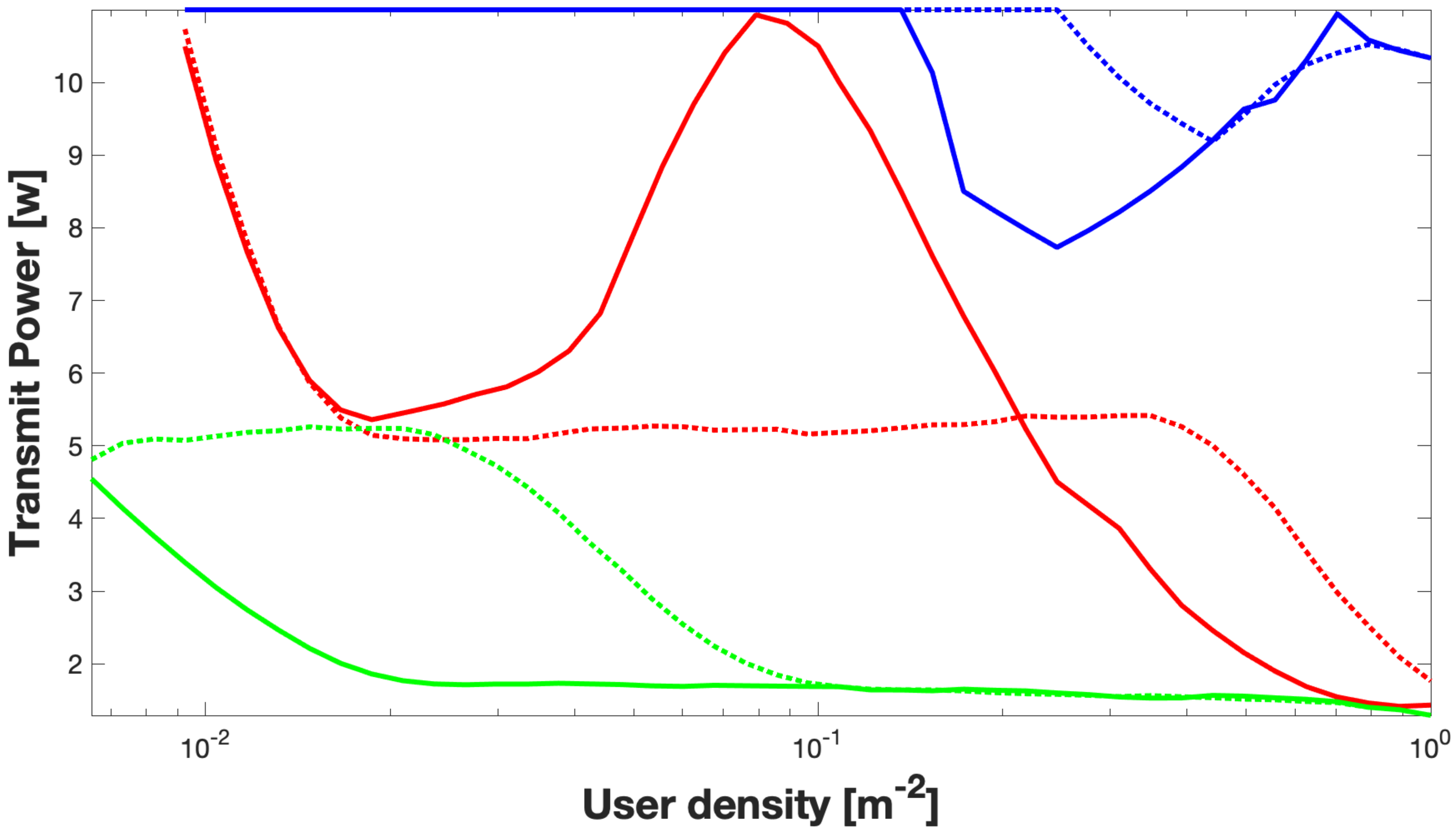}\label{fig:optimal_txpower_dps}%
}\hspace{-0.05 in}
\subfloat[Static Power Splitting]{%
  \includegraphics[width=0.33\columnwidth]{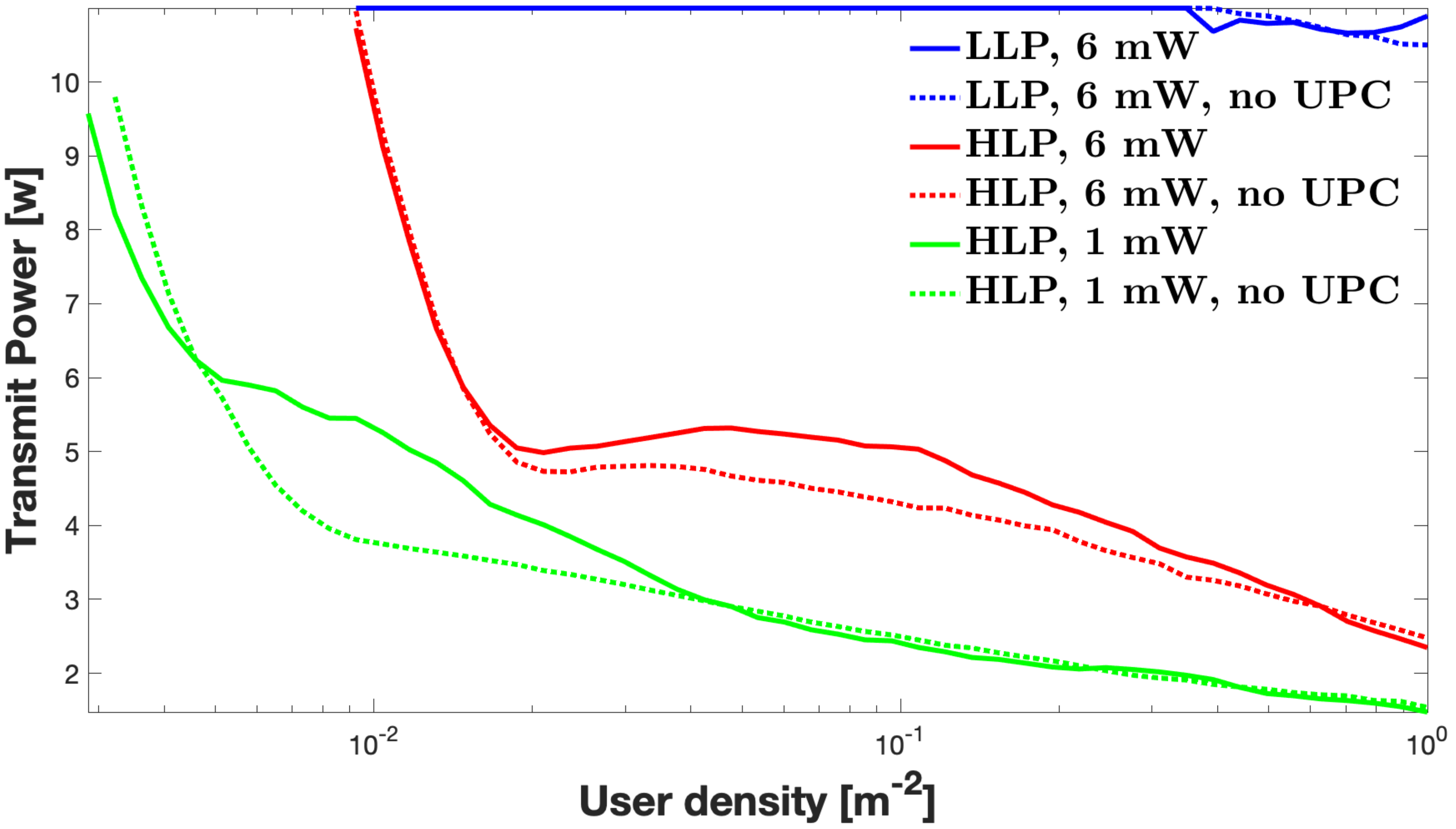}\label{fig:optimal_txpower_sps}%
}
\caption{Optimal transmit power for different target minimum harvested power and IoT UE receiver configurations.}
\label{fig:optimal_txpower}

\centering
\subfloat[Time Switching]{%
\includegraphics[width=0.33\columnwidth]{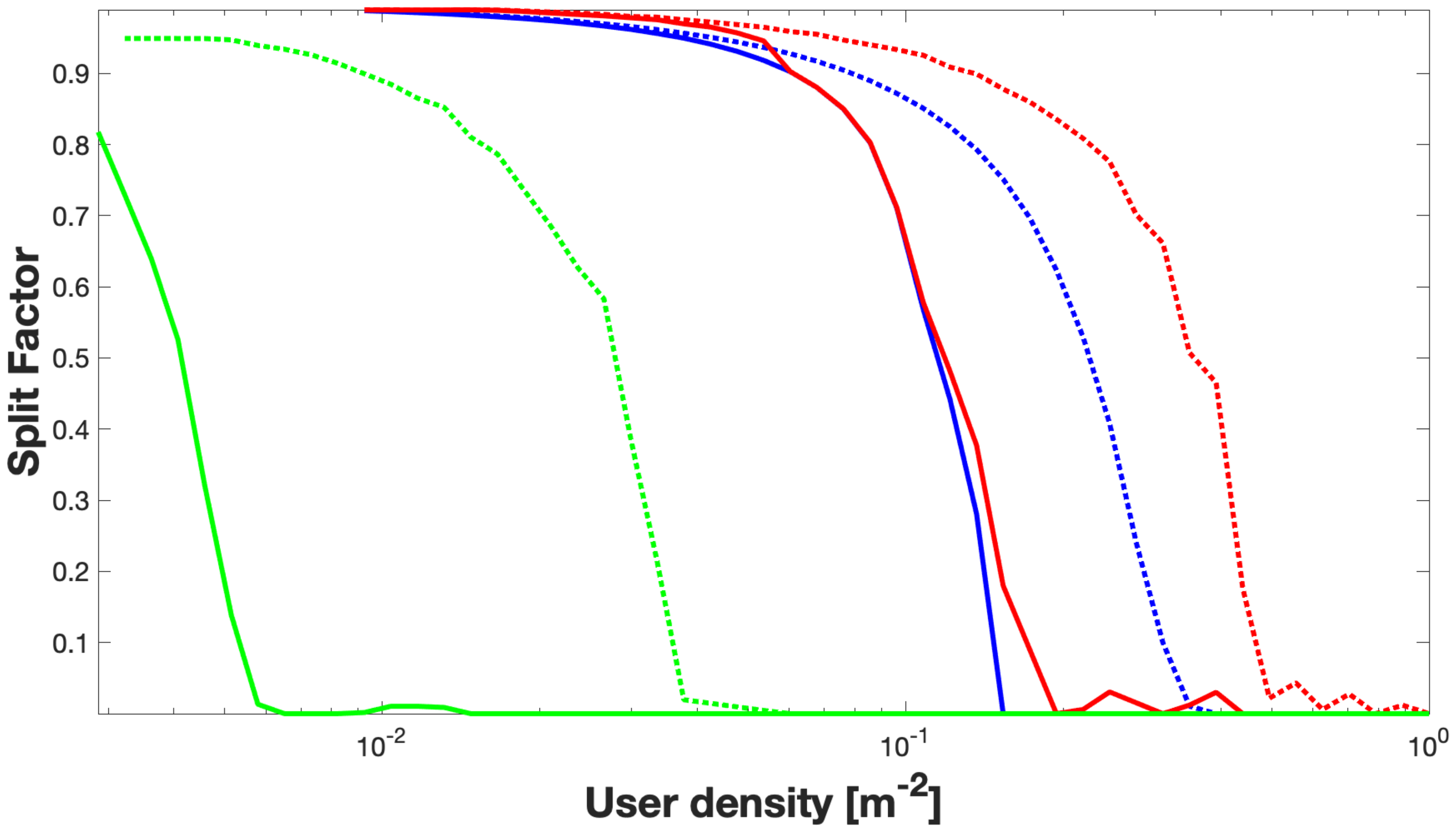}\label{fig:optimal_factor_ts}%
}\hspace{-0.05 in}
\subfloat[Dynamic Power Splitting]{%
  \includegraphics[width=0.33\columnwidth]{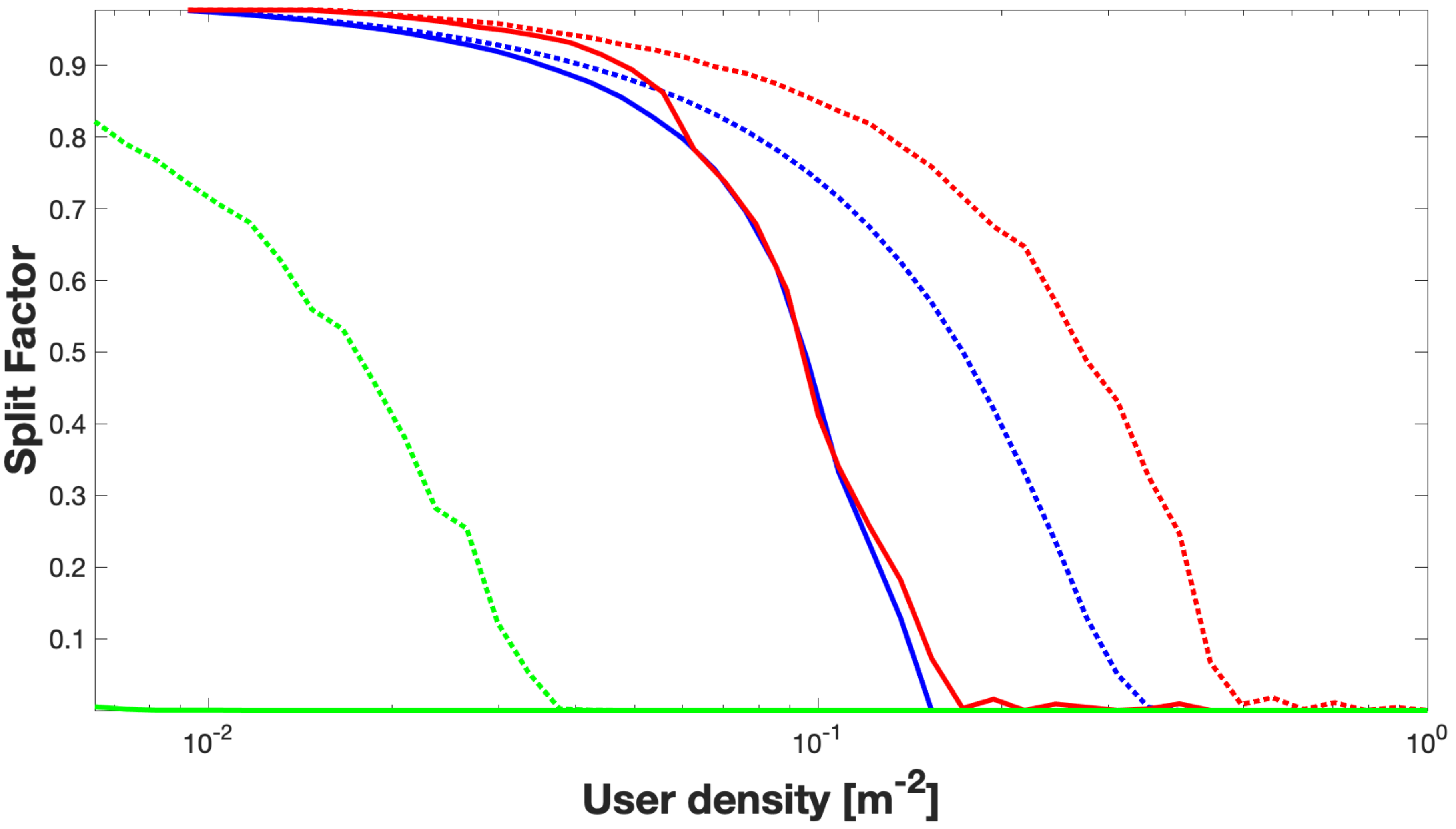}\label{fig:optimal_factor_dps}%
}\hspace{-0.05 in}
\subfloat[Static Power Splitting]{%
  \includegraphics[width=0.33\columnwidth]{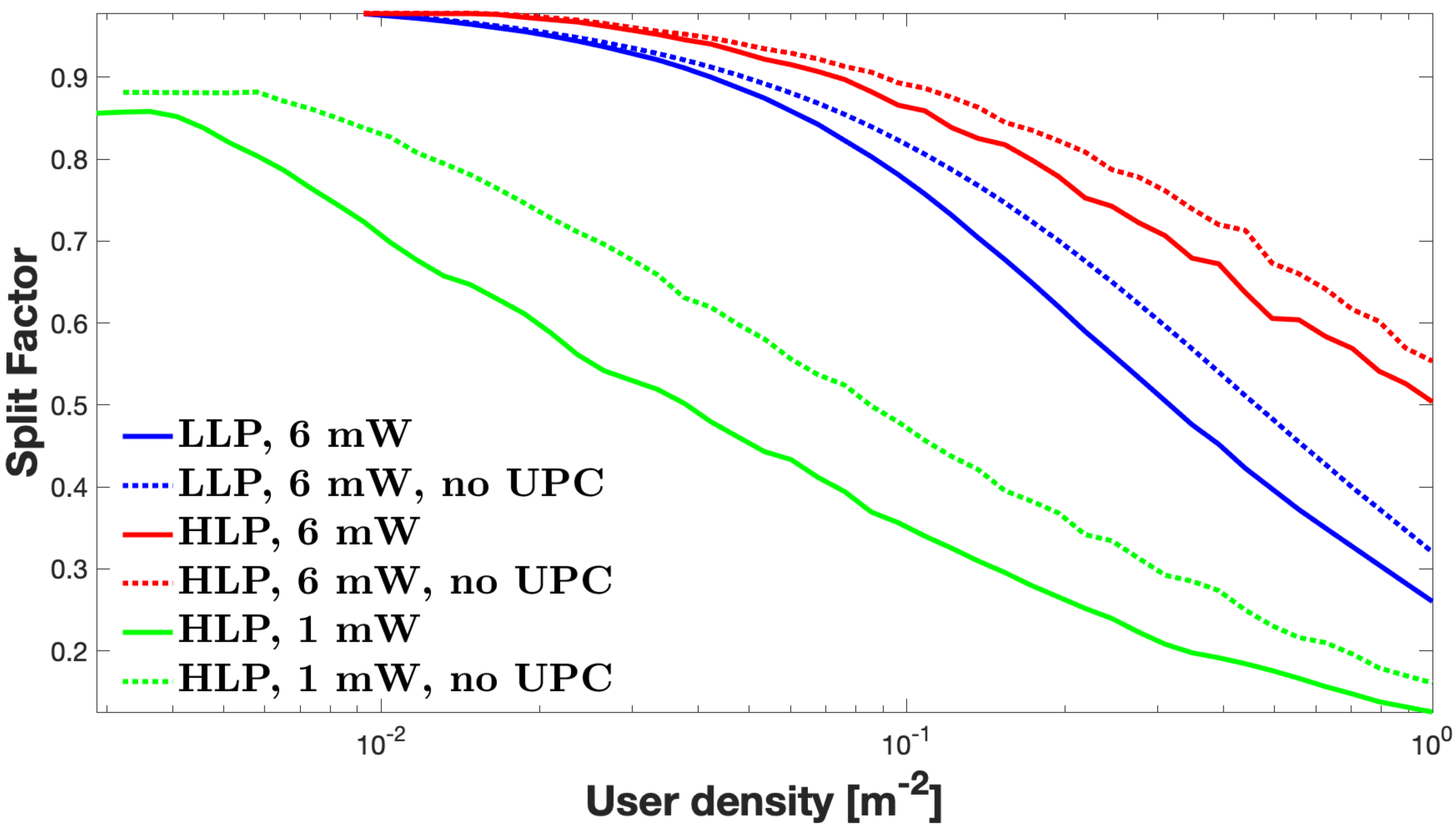}\label{fig:optimal_factor_sps}%
}
\caption{ Optimal time splitting factor (for TS) and power splitting factor (for DPS and SPS) for different target minimum harvested power and IoT UE receiver configurations.}
\label{fig:optimal_factor}
\end{figure*}
In \fref{fig:optimal_bsdensity}, \ref{fig:optimal_txpower}, and \ref{fig:optimal_factor} we plot the optimal values of BS density, of transmit power, and of time split ratio (respectively, power split ratio), for the three EH receiver configurations.
As \fref{fig:optimal_bsdensity} shows, in the TS and DPS cases the evolution of the optimal BS density as a function of user density is characterized by three regimes. First, for very low user densities, the optimal BS density increases with user density, as expected. For higher user densities, the optimal BS density reaches a plateau in the case without user-supplied service capacity, and it decreases in the case in which the full potential of passive EH is available. This feature suggests new and quite surprising patterns of BS sleep modes which, in contrast to those for non-SWIPT RANs, turn off BSs when demand increases. Finally, for even larger user densities, in the interference-dominated regime, the optimal BS density increases again. These three regimes are not present however for all values of target minimum harvested power, due to the minimum user-per-BS limit adopted. These three regimes are not present in the SPS case, in which the lower efficiency of passive EH brings optimal BS densities which increase monotonically with user density.\\
From \fref{fig:optimal_txpower} we see that the optimal transmit power varies with user density in a specular manner to optimal BS density, decreasing when it increases, and vice versa. That is, the energy optimal strategies emerging from our GA algorithm tend to compensate for the thinning (resp. densification) of BS density with cell zooming (resp. shrinking), via transmit power tuning. This is likely due to the fact that, when IoT node density increases, the impact on the system of the inefficiency due to high distance between BS and EH IoT user is stronger, pushing the optimal solution towards larger BS densities.\\
Fig. \ref{fig:optimal_factor} offers some key insights on how tuning split factor contributes to achieving energy-optimal operation of a SWIPT network. For low-load proportional BSs, as well as for configurations with high target minimum harvested power, the bulk of the energy consumed must be supplied by the serving BS, which thus has to allocate the majority of its time to transferring energy to IoT users. As expected, such a share of BS time is larger in scenarios with a larger target minimum harvested power. However, in all configurations, for increasing user densities, the share of BS time dedicated to active charging decreases, thanks to the combined effect of user-supplied service capacity, and higher BS density. Beyond a given user density (which depends on the BS energy model and receiver EH architecture) the SWIPT network effectively stops actively delivering power to IoT users (the split factor gets values close to zero), as passive power supply from ambient sources is sufficient to achieve the target QoS for energy harvesting. Again, note that for the SPS configurations, as harvesting from user transmissions is less efficient, the decrease in the splitting factor with increasing user density is substantially slower than in TS or DPS.\\
Finally, to characterize the effectiveness of our GA algorithm, we have compared the results derived from it with those achieved with a simple grid search heuristic. This latter is based on the discretization of the optimization parameters, and on an exhaustive search over these discrete values. In all tests, over all considered settings and for different values of the discretization step, the difference in results were only due to the chosen discretization step, and could be made arbitrarily small with a finer discretization step. This suggests that, for all the settings we have considered, our GA algorithm proved effective in avoiding the search process to get stuck into local minima, while being substantially more efficient.

\subsection{Impact of harvesting nonlinearities}
A crucial aspect of the energy harvesting process is the presence of nonlinearities in the relation between the output power of the energy harvester circuit at the IoT devices and its input power. Indeed, such aspect may heavily affect the amount of harvested power as a function of the network settings, and thus the energy-optimal configuration of the whole SWIPT network. To investigate its impact on the energy-optimal strategies derived by our GA, in another set of experiments we considered the energy harvesting curve given by Equation \ref{eq:sigmoid}. Following indications from the literature \cite{wang_wirelessly-powertwoway_2017}, we have set the maximum output power $h_{max}$ to $10$ mW, $\chi$ to $274$, 
and the sensitivity threshold $h_s$ to $0.064$ mW. Figure \ref{fig:mapping_function_nonlinear} shows the nonlinear EH curve for the chosen parameters and for two values of $\iota$, which tunes the maximum efficiency of the nonlinear harvesting process. Its values have been chosen not to have a maximum efficiency larger than one, and to approximate the linear harvesting curve which we have considered in the previous experiments. In the figure, we have also plotted the linear harvesting functions whose slope approximates well that of the two sigmoids. As the plot shows, both sigmoids are characterized by a range of input power within which they are well approximated by a linear function (and which we denote henceforth as a "quasilinear regime").
Note that, with the given settings, choosing $\iota=0.9$ (respectively, $1.30$) produces a sigmoid which in the quasilinear regime approximates well a linear model with slope $0.9$ (respectively, $0.77$). 
 \begin{figure}[t!]
\centering
\begin{minipage}{.45\textwidth}
  \centering
\includegraphics[width=\columnwidth]
{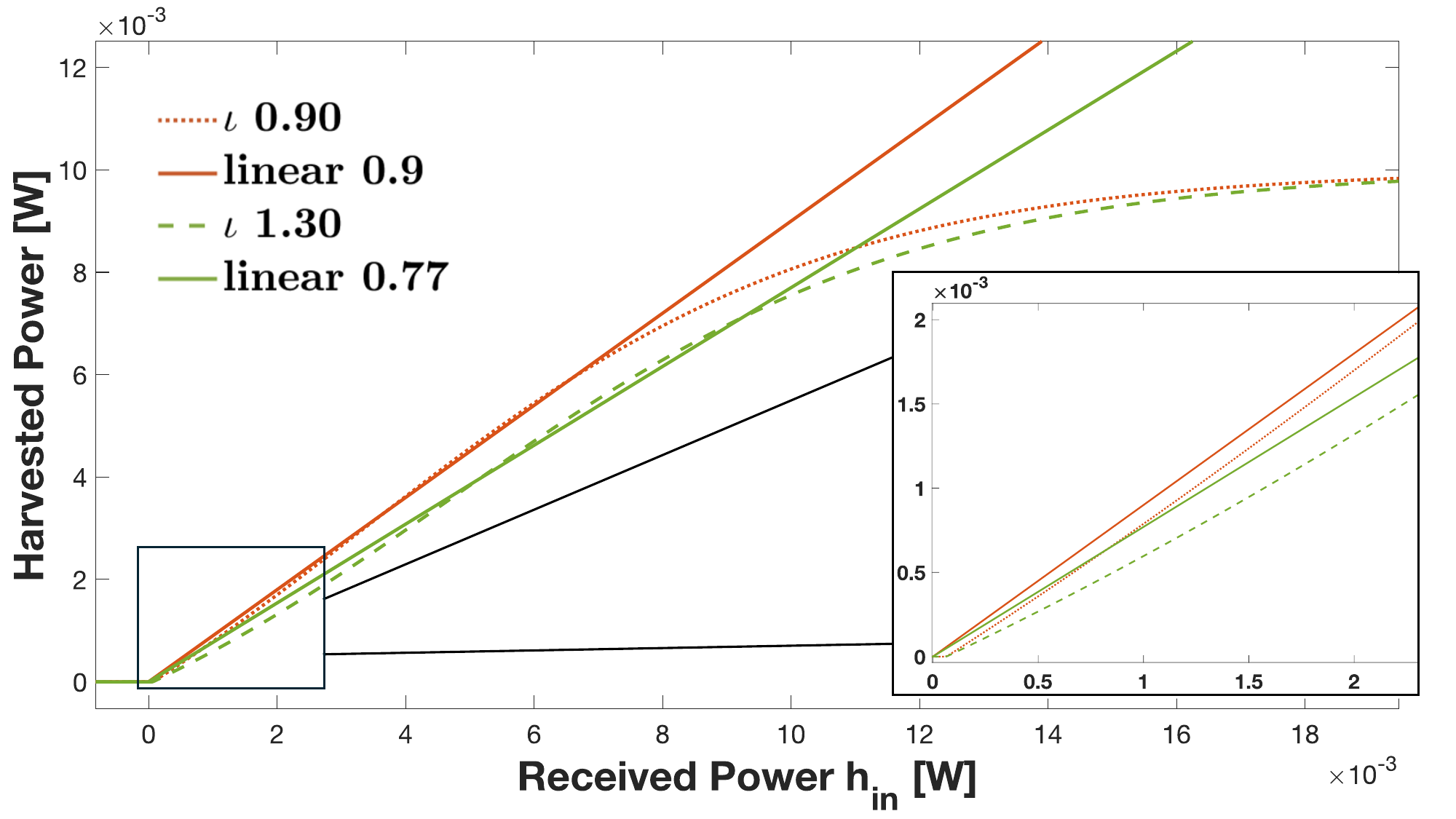}
\caption{Harvested power as a function of power received by IoT EH user, for a linear (with conversion efficiency of $0.9$ and $0.77$ respectively) as well as for the nonlinear model (Equation \ref{eq:sigmoid}) with $h_s=0.064$ mW, and $\chi=274$.}
\label{fig:mapping_function_nonlinear}
\end{minipage}%
\hspace{0.2in}
\begin{minipage}{.45\textwidth}
  \centering
  \includegraphics[width=\columnwidth]{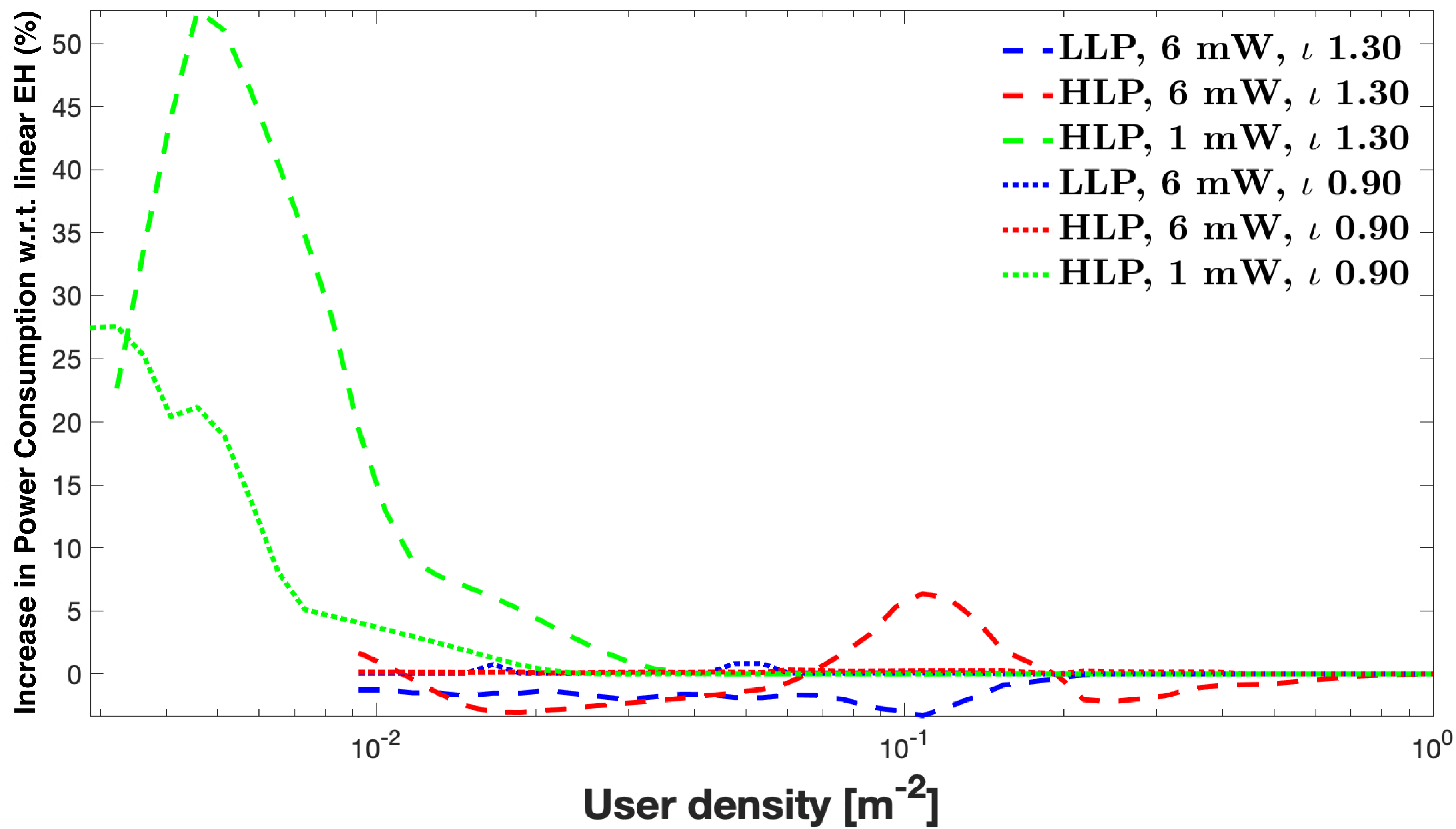}
\caption{Increase in consumed power with respect to linear EH, as a function of user density, for different values of the parameter $\iota$ of the nonlinear EH curve, and for different network setups, in the time splitting configuration.}
\label{fig:nonlinear_comparison}
\end{minipage}
\end{figure}

Figure \ref{fig:nonlinear_comparison} illustrates the increase in power consumed at the optimum induced by the nonlinear EH function versus that of its linear approximation, for the given values of $\iota$, and for time splitting EH mode.
As the plot suggests, nonlinearity brings to an increase of the consumed power for target minimum harvested power values which are much lower than $h_{max}$, and for low values of user density. For a target harvested power of $1$ mW the sigmoid is in a regime in which the conversion efficiency is $20\%$ to $30\%$ less than its maximum value, pushing the system to compensate with a more conservative transmit power allocation, thus bringing to a higher overall consumed power. This is more evident for lower user densities, as in those regimes the lower mean amount of users per base station brings (by the law of small numbers) to larger variance and thus to greatly amplify the effect of the $20\%$ to $30\%$ decrease in conversion efficiency on those users whose harvested power falls on the tail of the distribution. 
\section{Conclusions}
\label{sec:concl}
In this paper, we proposed an analytical framework for the characterization of the performance of a SWIPT network serving a combination of broadband users and EH IoT devices, and for the identification of energy-optimal network configurations satisfying constraints on user-perceived QoS. 
{A novel model that includes the usage of non-linear harvesting has been introduced taking into consideration aspects related to sensitivity and maximum capacity.}
Numerical results suggest that substantial energy savings are possible with schemes that adapt the main network parameters to fluctuations in user density. We also showed that such schemes remain essential, even in more energy-proportional cloud-RAN settings. The most interesting insight provided by our results concerns the interplay between energy harvesting and interference in cases of extreme densities of IoT devices. In such cases, we observed that very high device densities facilitate energy harvesting before interference becomes a problem. As a result, the optimal base station density can be very low even when the network load is high. This implies that base station sleep modes in SWIPT networks should follow different dynamics than those of networks that do not provide energy harvesting to IoT devices.
\bibliographystyle{IEEEtran}
\bibliography{IEEEabrv,refs,MyLibrary}
\appendix

\subsection{Proof of \lref{lemma:perjouledelay}}
\label{app:proof_lemma_perjouledelay}
\biagio{We derive the expression of $h_{in}$ only for the TS case, as the SPS and DPS cases are a minor variation of the former. The power the user receives at $x$ by its serving BS is $P D(x)^{-\alpha}$.
$K^{-1}(x)=N_{iot}(S(x))+w_d(1-\eta) N_{bb}(S(x))$ is the sum of the GPS weights of all users served by the base station serving the user at $x$. Given that $\eta$ is the GPS weight for the fraction of BS time dedicated to \textit{active} power transfer (to distinguish it from \textit{passive} power transfer, i.e. from power harvested by ambient radiation) to an IoT user, and as the GPS weight of IoT users is $1$, $\eta U_d(S(x))K(x)$ is the fraction of total base station time dedicated to actively charging the user at $x$. Thus, the amount of active transfer energy received by the user at $x$ is $P D(x)^{-\alpha} \eta U_d(S(x))G K(x)$.
This implies that the expression for the fraction of total base station time dedicated to transmitting information to the IoT user at $x$ is $(1-\eta) U_d(S(x))K(x)$, and that the BS time dedicated to serving other users is $(1-K(x))U_d(S(x))$. During this BS time, the BS charges also passively the given user, though it transmits to it with gain $L_g$.\\ 
From the above, it follows also that $\left[1-(1-\eta) U_d(S(x))K(x)\right]$ is the fraction of BS time during which the user at $x$ can harvest energy (from both active and passive transfers) because this is the fraction of time during which it is not receiving information, and $\left[1-(1-\eta) U_d(S(x))K(x)\right](I(x)+O(x))$ the received power due to passive power transfer from BSs other than the serving one, and from any transmitting user. Putting all together we have Eq. \ref{eq:harvested_power_TS}. }
\subsection{Proof of \thref{th:BE_tau}}
\label{app:proof_th_BE_tau}
\biagio{We consider the user at zero, but we drop this indication in what follows for ease of notation (i.e. $S(0)$ becomes $S$, and $D(0)$ becomes $D$).
 To derive $\bar{\tau}_d$, we start by computing the Palm expectation of $\tau_d^{id}(x)$ (we drop the dependence on P and G for notation ease):}
\[
\bar{\tau}_d=E^{0}\bigg[\frac{N_{iot}(S)+w_d N_{bb}(S)}{w_dC(D,I)}\bigg]= \int_0^{\infty}  E^0\bigg[\frac{N_{iot}(S,D)+w_d N_{bb}(S,D)}{w_dC(D,I)} |r \leq D \leq r + d r \bigg]P(r \leq D \leq r + d r)
\]
\[\approx\int_0^{\infty}  \frac{E^0[N_{iot}(S,D)+w_d N_{bb}(S,D)| r \leq D \leq r + d r ]}{w_dC(r,\bar{I}(r,k))}P(B(0,r) = \phi)\lambda_b 2 \pi r d r\]

\biagio{where $\bar{I}(r,k)$ is the average interfering power for the typical user at $r$, given by $\bar{I}(r,k)=\frac{PL\lambda_b2\pi r^{2-\alpha}}{k(\alpha-2)}\frac{\bar{\tau}_d}{\tau_d^0}$ \cite{rizzo2018esavingpotential}. The notation $N_{iot}(S,D)$, $N_{bb}(S,D)$ puts in evidence the fact that we are considering a BS located at $S$, serving the user at zero at a distance $D$.  
$P(B(0,r) = \phi)$ is the probability that a ball of radius $r$ centered at the origin is empty, equal to $e^{- \lambda_b \pi r^2}$.
If $N_{tot}(S,D)=N_{bb}(S,D)+N_{iot}(S,D)$, then $N_{iot}(S,D)+w_d N_{bb}(S,D)=QN_{tot}(S,D)$, with $Q=w_d+\gamma(\phi-w_d)$. The palm expectation at the numerator becomes $
QE^0[N_{tot}(S,D)| r \leq D \leq r + d r ]$.
The random variable $N_{tot}(S,D)$ is the total number of users present in the same cell as the user at the origin, when his distance from its serving base station is $D$.
As users are distributed according to a PPP with intensity $\lambda_u$, $N_{tot}(S,D)$ is Poisson, with an intensity given by 
the conditional Palm expectation inside the integral.}
Using Campbell's formula \cite{Stoyan1987}, this expectation becomes 
\[
E^0\bigg[\int_0^{\infty} \int_0^{2 \pi} \mathbf{1}_{(S(x,\theta) = S| r \leq D \leq r + d r)}\lambda_u d \theta x d x Q\bigg]=
\int_0^{\infty} \int_0^{2 \pi} \lambda_u P (S(x,\theta) = S| r \leq D \leq r + d r) d \theta x d x Q
\]
The conditional probability within the integral is given by
$\int_0^{\infty} \int_0^{2 \pi} \lambda_u e^{- \lambda_b A(r, x, \theta)} d \theta x d x$, where $A(r, x, \theta)$ is the area of the circle centered at $(x, \theta)$ with radius $x$ that is not overlapped by the circle centered at $(0, -r)$ with radius $r$
~\cite{rengarajan2015energy}.
By substituting, we get the expression for $\bar{\tau}_d$. The derivation of $\bar{\tau}_u$ follows along the same lines.

\biagio{Expressions (\ref{eq:tau_1}), (\ref{eq:tau_2}) and (\ref{eq:tau_3}) are implicit in the per bit delays. That is,  they are a function of the per-bit delays themselves, via the expression for the interference. Thus each constitutes a fixed point problem. However it is easy to see that the mapping operator associated with each of these fixed points is contractive, and thus the problem admits a unique solution.
Specifically, to prove the existence and uniqueness of the fixed point for Eq. \ref{eq:tau_1}, we prove that the operator $T(\bar{\tau}_d)$, whose expression is given by the right member of \eref{eq:tau_1}, is a contraction. To this end, we verify that Blackwell's sufficient conditions for a contraction \cite{blackwell1965discounted} hold for $T$. The monotonicity of $T$ is straightforward, because increasing $\bar{\tau}_d$ increases the mean BS utilization, and hence the mean interference level. This translates into an increase in per-bit delays (i.e. a decrease in mean throughput). For the discounting property, we prove that $\exists \beta\in(0,1)$ such that $T(\bar{\tau}_d+ a)\leq T(\bar{\tau}_d)+\beta a$, $\forall a \geq 0$ and for all system parameter values for which $\bar{\tau}_d$ is defined. This is done by upper bounding $\bar{I}(\bar{\tau}_d+ a)$ as the sum of two terms, one function of $\bar{\tau}_d$ and the other of $a$.  Substituting into \eref{eq:tau_1}, we have that the integrand in \eref{eq:tau_1} can be upper bounded as the sum of two terms, and after some algebraic steps this allows proving the discounting property. }

\subsection{Proof of \thref{th:BE_tau_stdev}}
\label{app:proof_stdev}

\begin{lemma}\label{lemma_recharging} The average power received from users for the typical user arriving in the system at a distance $r$ from the serving BS is approximated by $
\bar{O}=\frac{(1-\gamma)\delta_u P_{bb} +\phi \gamma P_I}{(1-\gamma)\delta_u+\phi\gamma}\frac{\lambda_b\pi \alpha }{\alpha-2}\frac{\bar{\tau}_u}{\tau_u^0}$.
\end{lemma}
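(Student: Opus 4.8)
The plan is to compute $\bar{O}$ as the mean aggregate power that the typical IoT user harvests from the ensemble of UEs that are transmitting in uplink at a given instant, and to organize it as (mean transmit power of an active transmitter) $\times$ (Campbell integral over the transmitter point process). The first step is to characterize this transmitter process. Because the GPS scheduler time-shares uplink airtime at each BS, at any instant a busy BS is receiving from exactly one UE; hence the active uplink transmitters can be approximated by a point process whose intensity equals the density of BSs busy in uplink, namely $\lambda_b U_u$. Invoking the per-bit-delay/utilization identity established earlier (so that, in analogy with $U_d=\bar{\tau}_d/\tau_d^0$, one has $U_u=\bar{\tau}_u/\tau_u^0$), this intensity becomes $\lambda_b\,\bar{\tau}_u/\tau_u^0$, which already accounts for both the $\lambda_b$ and the $\bar{\tau}_u/\tau_u^0$ factors in the claim.

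Next I would compute the mean transmit power $\bar{P}_{tx}$ of a typical active transmitter. Conditioned on a cell being busy in uplink, the probability that the occupying UE is broadband rather than IoT is proportional to the respective uplink airtimes. A BB user carries GPS weight $\delta_u$ and is always active, contributing weight $(1-\gamma)\delta_u$; an IoT user carries weight $1$ but is active only a fraction $\phi$ of the time, contributing $\phi\gamma$. Weighting $P_{bb}$ and $P_I$ by these airtime shares gives
\[
\bar{P}_{tx}=\frac{(1-\gamma)\delta_u P_{bb}+\phi\gamma P_I}{(1-\gamma)\delta_u+\phi\gamma},
\]
which is precisely the leading fraction in the statement.

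The final step applies Campbell's formula (as in the proof of \thref{th:BE_tau}) to the transmitter process. Since uplink transmissions reach the typical user with antenna gain $1$, the mean harvested power equals $\bar{P}_{tx}$ times the transmitter intensity times the spatial path-loss integral. To tame the non-integrable singularity of $\rho^{-\alpha}$ near the origin, I would cap the path loss at the reference distance (normalized to one), i.e.\ use $\min(1,\rho^{-\alpha})$, reflecting that received power cannot exceed transmit power. The resulting integral is elementary,
\[
\int_{\mathbb{R}^2}\min\!\left(1,\|z\|^{-\alpha}\right)\diff z=2\pi\!\int_0^1\!\rho\,\diff \rho+2\pi\!\int_1^\infty\!\rho^{1-\alpha}\,\diff \rho=\pi+\frac{2\pi}{\alpha-2}=\frac{\pi\alpha}{\alpha-2},
\]
for $\alpha>2$. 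Multiplying $\bar{P}_{tx}$, the intensity $\lambda_b\,\bar{\tau}_u/\tau_u^0$, and this factor then yields the stated $\bar{O}$.

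The main obstacle is the justification of treating the active uplink transmitters as a stationary process of intensity $\lambda_b U_u$ amenable to Campbell's formula: the transmitters are really one UE per Voronoi cell, a spatially correlated configuration rather than an independent PPP, and their positions are coupled to the same BS process that fixes the typical user's serving cell. I would argue, consistently with the approximations already used for the downlink interference $\bar{I}$, that retaining only first-moment (intensity) information suffices for the mean harvested power and that the distance-dependent correlations average out once integrated over space, which is also why the result carries no residual dependence on $r$. A secondary point to make explicit is the path-loss regularization: the cap at unit distance is exactly what produces the $\alpha$ in the numerator, so I would state this modeling choice clearly rather than leave the origin singularity unaddressed.
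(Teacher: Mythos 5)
Your proposal is correct and follows essentially the same route as the paper's proof: the same airtime-weighted mean transmit power $\frac{(1-\gamma)\delta_u P_{bb}+\phi\gamma P_I}{(1-\gamma)\delta_u+\phi\gamma}$, the same effective transmitter intensity $\lambda_b\,\bar{\tau}_u/\tau_u^0$, and the same regularized path-loss integral $\int_0^{\infty}\min(1,s^{-\alpha})2\pi s\,\diff s=\pi\alpha/(\alpha-2)$, which the paper also uses explicitly in \eref{eq:quasifinal}. The only cosmetic difference is that you posit one active transmitter per busy BS directly, whereas the paper assigns each of the $\lambda_u A_j$ users in a Voronoi cell a transmission probability $\frac{1}{\lambda_u A_j}\frac{\bar{\tau}_u}{\tau_u^0}$ and lets the $\lambda_u$ cancel against the mean cell area $\bar{A}=\lambda_b^{-1}$; these are equivalent first-moment computations.
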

\begin{proof}  
The transmit power from a user (averaging between IoT and BB users) is 
$\frac{(1-\gamma)\delta_u P_{bb} +\gamma\phi  P_I}{(1-\gamma)\delta_u+\gamma\phi}$.
Let $\chi(j)$ is the set of users served by the $j-$th BS, and $x'_i(t), i\in\chi(j)$ be the position of the $i$-th user served by BS $j$ at time t. Let $d(x,x'_i(t))$ denote the distance between the two users considered, and $u(x'_i(t))$ the probability that the given user is transmitting at time $t$. Then  
\vspace{-0.2in}
\be\label{eq:palminterf}
O(x,t)=\sum_{j}\sum_{i\in\chi(j)}\frac{(1-\gamma)\delta_u P_{bb} +\gamma \phi P_I}{(1-\gamma)\delta_u+\gamma \phi}d(x,x'_i(t)) ^{-\alpha}u(x'_i(t))
\ee
As we assume the base station has utilization $\frac{\bar{\tau}_u}{\tau_u^0}$ in the uplink,
$u(x'_i(t))=\frac{1}{\lambda_u A_j}\frac{\bar{\tau}_u}{\tau_u^0}$, where $A_j$ is the area of the Voronoi cell of BS j. Given that users are uniformly distributed in the plane, the Palm expectation of \eref{eq:palminterf} is well approximated by $\frac{(1-\gamma)\delta_u P_{bb} +\gamma \phi P_I}{(1-\gamma)\delta_u+\gamma \phi}E\left[\sum_{j}\frac{\int_{x'\in A_j} d(x,x')^{-\alpha} dx'}{\lambda_u A_j}\right]\frac{\bar{\tau}_u}{\tau_u^0}$. As $A_j$ is statistically independent on $d(x,x')$, the approximated formula becomes
\vspace{-0.3in}
\be\label{eq:quasifinal}
 \bar{O}=\frac{(1-\gamma)\delta_u P_{bb} +\gamma \phi P_I}{(1-\gamma)\delta_u+\gamma\phi}\frac{\int_0^{+\infty}\lambda_u min(1,s^{-\alpha})2\pi s ds}{\lambda_u \bar{A}}\frac{\bar{\tau}_u}{\tau_u^0}
\ee
where $\bar{A}=\lambda_b^{-1}$ is the mean area of a Voronoi cell for a BS density of $\lambda_b$.
The final expression of $\bar{O}$ is derived from the high attenuation assumption ($\alpha>2$).  
\end{proof}

\begin{proof}(\thref{th:BE_tau_stdev}) 
Let us rewrite expressions (5) to (7) in the form  $h(x)=F(x)+K(x)Z(x)$. We have:
\begin{itemize}
    \item In the \textit{TS} mode, $F_{TS}(x)=P D(x)^{-\alpha}LU_d(x) +I(x) +O(x)$ and $Z_{TS}(x)=P D(x)^{-\alpha}U_d(x)(G\eta-L_g)-U_d(x)(1-\eta)(I(x) +O(x))$.
\item In the \textit{SPS} mode, $F_{SPS}(x)=\nu F_{TS}(x)$ and $Z_{SPS}(x)=\nu P D(x)^{-\alpha}U_d(x)(G-L_g)$.
\item In the \textit{DPS} mode, $F_{DPS}(x)= F_{TS}(x)$ and $Z_{SPS}(x)=P D(x)^{-\alpha}U_d(x)(\nu G-L_g)$.
\end{itemize}
Let us consider first the TS case. 
Let's compute $E^0[F(D)| r \leq D \leq r + d r]$.
In the TS case, $\forall x$, we approximate $U_d(x)=\frac{\bar{\tau}_d}{\tau_d^0}$. Thus $
P E^0[D(x)^{-\alpha}LU_d(x)| r \leq D \leq r + d r]=P r^{-\alpha}L_g\frac{\bar{\tau}_d}{\tau_d^0}$.
As for the remaining terms, 
$O(x)$ does not depend on the user distance from its serving BS. The expression of its expected value is thus given by \lref{lemma_recharging}. As for the conditional expectation of $I(x)$, it is given by \eref{eq:interference} in \thref{th:BE_tau}, multiplied by the reuse factor $k$. We thus have $E^0[I(x) + O(x)| r \leq D \leq r + d r]=
k\bar{I}(r,k) +\bar{O}$.
The SPS and DPS cases are derived similarly.
As for $K(x)Z_{TS}(x)$,
 $Z(x)$ is a function of $D$, except for $O(x)$, for which we approximate  $E^0\left[K(D)O(D)| r \leq D \leq r + d r \right]= E^0\left[K(D)| r \leq D \leq r + d r \right]O(r)$. We thus have
\[
E^0\left[ K(D)| r \leq D \leq r + d r \right] Z(r)
=E^0\left[\frac{1}{Q N_{tot}(D)}| r \leq D \leq r + d r \right]Z(r)
\]
For Jensen's inequality, \scalebox{1}[1]{$E^0\left[\frac{1}{N_{tot}(D)}| r \leq D \leq r + d r \right]\leq \frac{1}{E^0[N_{tot}(D)| r \leq D \leq r + d r]}
$}.
For a given cell, $N_{tot}(D)$ is Poisson distributed. Thus the ratio of the standard deviation over the mean of this variable decreases with increasing user density. Therefore in the dense IoT regime, such an inequality is tight. therefore, the denominator is then computed as a function of $r$ as in the proof of \thref{th:BE_tau}. 
The derivation of the expressions for the DPS and SPS cases follows along the same line.
\end{proof}

\balance
\end{document}